\theoremstyle{definition}
\newtheorem{theorem}{Theorem} 
\newtheorem{lemma}{Lemma}
\newcommand{\RNum}[1]{\uppercase\expandafter{\romannumeral #1\relax}}
\begin{document}

\title{Rate-Region Characterization and Channel Estimation for  Cell-Free Symbiotic Radio Communications  }

%

\author{
   {Zhuoyin Dai, Ruoguang Li, Jingran Xu, 
   Yong Zeng,~\IEEEmembership{Member, IEEE}, 
   and Shi Jin,~\IEEEmembership{Senior Member, IEEE} }  \\

\thanks{
This work was supported by the National Key R\&D Program of China with 
grant number 2019YFB1803400.
Part of this work has been presented at 
the  2021 IEEE/CIC ICCC Workshops, Xiamen, China, 28-30 Jul.  2021 \cite{Dai2021a}.

The authors are with the National Mobile Communications Research Laboratory, 
Southeast University, Nanjing 210096, China. Y. Zeng is also with the 
Purple Mountain Laboratories, Nanjing 211111, 
China (e-mail: \{zhuoyin\_dai, ruoguangli, jingran\_xu, yong\_zeng, jinshi\}@seu.edu.cn). 
(\emph{Corresponding author: Yong Zeng.})
}
}
\maketitle


\begin{abstract}
 
Cell-free massive MIMO and symbiotic radio communication
have been recently proposed as the
promising beyond fifth-generation (B5G)  networking architecture and transmission
technology, respectively. To reap the benefits of both, 
this  paper studies cell-free symbiotic
radio communication systems, where a number of cell-free access points
(APs) cooperatively send primary information to a receiver,
and simultaneously support the passive backscattering communication
of the secondary backscatter device (BD). 
We first derive the  achievable communication  rates 
of the active primary user and passive secondary  user
under the assumption of perfect  channel state information (CSI),
based on which the  transmit beamforming of the cell-free APs 
is optimized to characterize the achievable rate-region of 
cell-free symbiotic communication systems. 
Furthermore, to practically acquire the CSI of the 
active and passive channels, 
we propose  an efficient channel estimation method based on two-phase
uplink-training, and the 
achievable rate-region 
taking into account CSI estimation errors are further characterized.
Simulation results are provided to show the effectiveness 
of our proposed  beamforming and channel estimation methods. 
\end{abstract}

\begin{IEEEkeywords}
cell-free massive MIMO, symbiotic radio, backscattering,  channel estimation, 
active and passive communication.
\end{IEEEkeywords}

\IEEEpeerreviewmaketitle

\section{Introduction}
With the  ongoing  commercial deployment of the fifth-generation (5G) mobile communication 
networks, the academia
and industry communities have started the investigation of the key technologies 
for  beyond fifth-generation (B5G) or the sixth-generation (6G) networks \cite{LavtaM2019a, YongZ2021a,YouX2021a}. 
In order to meet the orders-of-magnitude performance improvement  in terms of coverage, 
connectivity density, data rate, reliability,  latency, etc., many 
promising technologies are being investigated, such as
extremely large-scale MIMO/surface \cite{HuS2018a,LuH2021a},
millimeter wave or TeraHertz communication \cite{MR2014a,ElayanH2018a }, 
non-terrestrial networks (NTN) \cite{GiordaniM2021a, YZeng2019a }, 
reconfigurable intelligent surface (RIS) \cite{TangW2019a,WuQ2021b},
and artificial intelligence (AI)-aided wireless communications \cite{LuongNC2019a}.
On the other hand,
\textit{cell-free massive MIMO} \cite{Ngo2017a} and \emph{symbiotic radio communication} \cite{YLiang2020a} 
were recently proposed as the promising B5G networking 
architecture and transmission technology, respectively,
which have received fast-growing attentions.

As a radically  new potential networking architecture  for B5G mobile 
communication networks, cell-free massive MIMO is significantly  different from the 
classical cellular  
architecture since it blurs  the conventional concepts of 
cells or cell boundaries \cite{Ngo2017a}. Instead, geographically  distributed 
access points (APs) \cite{BjornsonE2020a,ChenZ2018a}, which are connected to the central processing unit (CPU), 
cooperatively serve their surrounding users to achieve high macro diversity.  
Cell-free massive MIMO  is expected to mitigate the inter-cell
interference issues suffered by  small cell 
systems and  provide users with consistently high quality of service 
everywhere \cite{Interdonato2019a}. 
Significant  research efforts have been recently devoted to
the theoretical study and practical design of cell-free massive MIMO systems. 
For example,  the performance of two basic linear precoding schemes, i.e.,
conjugate beamforming and zero-forcing precoding,  
was compared for cell-free massive MIMO in  \cite{Nayebi2017a}. 
The receiver filter coefficients   and power allocation of 
cell-free massive MIMO were optimized  to maximize
the minimal user rate or bandwidth efficiency in \cite{BasharM2019a,BasharM2020a}. 
Furthermore, in \cite{NguyenL2017a,Ngo2018a},   
the communication resource allocation was optimized to maximize the
energy efficiency and spectral efficiency of cell-free massive MIMO systems.

On the other hand, symbiotic radio has been recently proposed 
as a promising B5G transmission 
technology \cite{YLiang2020a}, which  is able to exploit the benefits of the 
conventional cognitive radio (CR) and the emerging passive ambient backscattering 
communications (AmBC) to realize spectral- and 
energy-efficient communications \cite{RLong2020a}. Specifically, 
the passive secondary backscatter device (BD) in symbiotic radio systems 
reuses  not only the spectrum of the active
primary communication as in traditional CR systems, but also its
power via passive backscattering technology  \cite{RLong2019a}.  
Based on the relationship of symbol durations of the primary
and secondary signals,  symbiotic radio systems can be classified as
\emph{commensal symbiotic radio} (CSR) and \emph{parasite symbiotic radio} (PSR) \cite{GuoH2019b}. 
In CSR, the secondary signals have  much longer symbol durations
than the primary signals, rendering  the secondary backscattering communication to
contribute additional multipath components to enhance the primary communication. 
As a result, the primary and secondary communications form a mutualism 
relationship \cite{RLong2020a}.  On the other hand, for PSR, 
the primary and secondary signals  have  equal symbol durations,
so that the secondary signals may interfere with the primary signal. 
However, compared to the CSR case, the secondary communication rate in PSR can be significantly improved.  
Significant research efforts have been devoted to the study 
of symbiotic radio systems. For example, in order to 
maximize the secondary communication rate, 
an exact penalty beamforming method based on the local optimal solution 
was proposed in \cite{WuT2021a}. Besides, \cite{GuoH2019a} and \cite{ChuZ2020a} 
investigated how to effectively allocate communication resources such 
as transmit power and reflection coefficient, so as to 
improve the energy efficiency and achievable rates of symbiotic radio.


It is worth remarking that all the aforementioned 
existing  works studied cell-free massive MIMO or 
symbiotic radio communication systems separately, i.e., 
cell-free  systems with   conventional 
active communication or symbiotic radio transmission in conventional 
cellular network or  the simplest  point-to-point communications.
As the promising 
B5G  networking architecture and transmission technology, respectively, it is 
natural that cell-free networking and symbiotic radio communication would 
merge into each other to reap the benefits of both. 
This motivates our current 
work to investigate  cell-free symbiotic
radio communication systems, which, to the best of our knowledge, 
have not been studied in the existing literature.  
By combining cell-free architecture with symbiotic radio transmission technology, 
the passive secondary communication in symbiotic radio system 
is enhanced by the cooperation gain of distributed APs,  
thus realizing passive communication with high  macro-diversity.   
In this paper, we study a basic cell-free symbiotic radio 
system, in which a number of distributed multi-antenna  APs  
cooperatively send primary information to a receiver, and concurrently support 
the passive backscattering communication of the secondary BD.
As such,  the distributed cooperation gain by APs can be exploited to 
enhance both the primary and secondary communication rates.
Our specific contributions are summarized as follows:
\begin{itemize}
	\item First, we present  the mathematical  model of cell-free symbiotic
radio communication systems, which is a promising  
system  that exploits both advantages of 
cell-free networking architecture and symbiotic radio transmission technology.  
Under the assumption of perfect channel state information (CSI) of the direct active channels 
and cascaded passive channels, the
achievable rates of both the primary and secondary communications are derived.
	
\item Next, we relax the assumption of perfect CSI and investigate the practical 
CSI acquisition method for the considered cell-free symbiotic radio system.  
Similar to the extensively studied massive MIMO systems, efficient channel 
estimation for cell-free massive MIMO can
be achieved by exploiting the uplink-downlink channel reciprocity 
\cite{MishraD2019a, KaltenbergerF2010a,MarzettaTL2010a}, 
i.e., the downlink channels can be efficiently
estimated via uplink training.   However, different from the
existing cell-free massive MIMO systems \cite{Ngo2017a}, the channel
estimation for cell-free symbiotic radio system requires estimating
not only the active direct-link channels, but also the passive backscatter
channels. To this end, we propose a two-phase based channel estimation method
for cell-free symbiotic
radio communication systems. In the first phase, pilot
symbols are sent by the receiver while muting the BD, so as
to estimate the direct-link channels. In the
second phase, pilots are sent  by both the receiver and the BD
so that, together with the estimation of the direct-link channels,
the cascaded backscatter channels are estimated. 
Furthermore, the  channel estimation errors in both phases are derived,
which are shown to be dependent on the total pilot length and the pilot allocation 
between the two training phases.
The achievable rates 
under  imperfect CSI are derived 
by taking into account the CSI estimation errors.

	\item  Furthermore, for both the ideal scenario with perfect CSI 
and practical scenario of 
imperfect CSI with channel estimation errors, we formulate the beamforming 
optimization problem to
characterize the achievable rate-region of the active primary communication and 
the passive secondary communication. The formulated problems are non-convex  
in general, which are difficult to be directly solved. 
We show that a closed-form solution can be obtained for the 
special case when the targeting primary rate is relatively small. 
Furthermore, for the general cases,  we show  that the rate threshold constraint 
can be converted into the convex second-order cone (SOC) constraint,
and that the nonconcave  objective function can be globally lower-bounded  
by its first-order Taylor expansion.
Therefore, efficient algorithms are proposed 
based on   successive convex approximation (SCA) technique 
\cite{zeng2017energy,Zappone2017a,Marks1978a}. 
Numerical  results are provided to demonstrate that
the proposed channel estimation and optimization approaches are effective 
in cell-free symbiotic radio communication systems.

\end{itemize}

The rest of this paper is organized as follows. Section II
presents the mathematical  model  of cell-free symbiotic radio communication  systems.
Under the assumption of perfect CSI, Section III 
characterizes the 
achievable rate-region of passive secondary communication and 
active primary communication
by optimizing the transmit beamforming of the APs.
In Section IV, 
a two-phase uplink-training 
based channel estimation method 
is proposed,  and the achievable primary 
and secondary communication rates taking into account the channel estimation errors are 
derived. Furthermore, the beamforming optimization problem with 
imperfect CSI is also studied in Section IV. 
Section V presents numerical results to validate our proposed designs.  
Finally, we conclude the paper in Section VI.

\emph{Notations:} In this paper, scalars are denoted by italic letters.
Vectors and matrices are denoted by boldface lower-and upper-case letters
respectively. $\mathbb{C}^{N\times 1}$ denotes the
space of $N$-dimensional complex-valued vectors.
$\mathrm{Re}\{\cdot\}$ and $\mathrm{Im}\{\cdot\}$ denote the   
real and imaginary parts, respectively.
$\mathbb{E}_{X}[\cdot]$ denotes the expectation with respect to the
random variable $X$.  
$\mathrm{Ei}(x)\triangleq \int_{-\infty}^{x}\frac{1}{u}e^{u}du$ denotes 
the exponential integral from $-\infty $ to $x$.
 $\mathbf I_N$ denotes an $N \times N$ identity matrix.
For a vector $\mathbf{a}$, its
transpose, Hermitian transpose, and Euclidean norm are respectively
denoted as
$\mathbf{a}^{T}$, $\mathbf{a}^{H}$ and $\| \mathbf{a}\|$. 
Meanwhile, $\mathbf{a}[m:m+n]$ represents the subvector of $\mathbf{a}$
made up of its $m$th to $(m+n)$-th elements.
$\log_{2}(\cdot) $ denotes the   logarithm with base 2. Furthermore,
$\mathcal{CN}(\mu,\sigma^{2})$  denotes the 
circularly symmetric complex  Gaussian (CSCG)
distribution with mean $\mu$ and variance $\sigma^{2}$.


\section{System Model}
\begin{figure}
	\centering
\includegraphics[height=2.65in, width=3.55in]{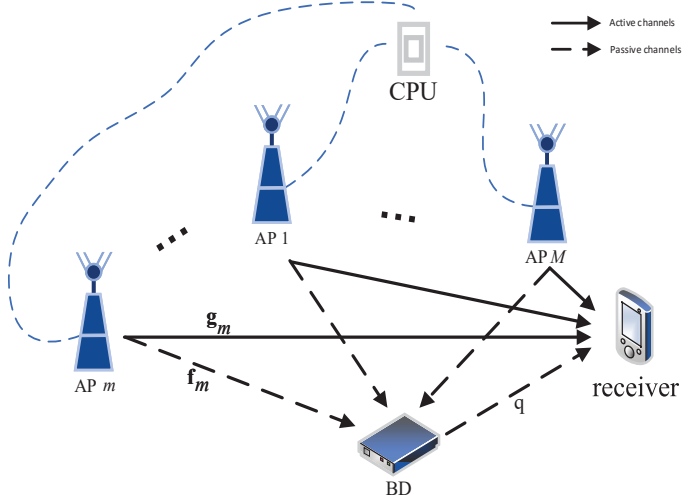}
\caption{Cell-free symbiotic radio communication system, where $M$ distributed  APs cooperatively 
transmit primary information to the receiver and concurrently  support the
 secondary passive backscattering communication.    }
\end{figure}
As shown in Fig. 1, we consider a cell-free symbiotic radio system, which 
consists of  $M$ distributed  APs, one information receiver, and one BD. 
The $M$ APs   cooperatively send primary information to the receiver,  
and simultaneously  support the  BD for secondary communication via passive
backscattering to the same  information receiver.  The considered system may model
a wide range of  applications, e.g., with the 
receiver corresponding to smartphones and the BD being the smart 
home sensor node.  
We assume that each AP is equipped with $N$ antennas, whereas the receiver and BD each has 
one antenna.  Denote by $\mathbf{g}_m\in \mathbb{C}^{N\times 1}$ and 
$\mathbf {f}_m\in \mathbb{C}^{N\times 1}$
the  multiple-input single-output (MISO) channels from the $m$th AP to the receiver and BD, 
respectively, where $m=1,...,M$. Further denote by $q\in \mathbb{C}$  the channel coefficient 
from the BD to the receiver. Thus, the  cascaded backscatter channel
from the $m$th AP to the receiver via the BD is $q\mathbf{f}_{m}$.

In this paper, we focus on the PSR setup\cite{RLong2020a},
where the symbol durations of the primary and secondary signals are equal. Let
$s(n)\sim\mathcal{CN}(0,1)$ 
and $c(n)\sim\mathcal{CN}(0,1)$   denote the CSCG
information-bearing symbols of the primary 
and secondary signals, respectively. 
Further denote by   $\mathbf{w}_{m} \in 
\mathbb{C}^{N\times 1}$   the transmit beamforming vector 
of the $m$th AP, where its  power is $\|\mathbf{w}_m\|^{2}\leq  P_{m}$, 
with $P_{m}$ denoting the maximum allowable transmit power of the $m$th AP. 
The received signal by the receiver  is 
  \begin{equation}\label{eq:rn1}
    r(n)= \!\sum_{m=1}^{M} \!\big[ \mathbf{g}_{m}^{H}\mathbf{w}_{m}s(n)\!+\!
    \sqrt{\alpha}  q\mathbf{f}_{m}^{H}\mathbf{w}_{m}s(n)c(n) \big] \!+\!z(n), 
  \end{equation}
where $ \alpha $ denotes the power reflection coefficient of the BD, 
$z(n)\sim\mathcal{CN}(0,\sigma^{2})$ is the additive 
white Gaussian noise (AWGN). 
Based on the received signal $r(n)$ in (\ref{eq:rn1}), the receiver needs to decode both the 
primary and secondary signals. Since the backscatter 
link is typically much weaker than the direct link, the receiver may first decode 
the primary symbols $s(n)$, by treating the backscatter interfering  signals  as 
noise, whose power is
$\mathbb{E}[    |\sqrt{\alpha}\sum_{m=1}^{M} q \mathbf{f}_{m}^{H}\mathbf{w}_{m} s(n)c(n)|^{2}]
=   \alpha|q|^{2}|\sum_{m=1}^{M}   \mathbf{f}_{m}^{H}\mathbf{w}_{m}|^{2}$.
Therefore, the signal-to-interference-plus-noise ratio (SINR) for 
decoding the primary information is  
\begin{equation}\label{eq:gammas}
  \gamma_{s}=\frac{ |\sum_{m=1}^{M} \mathbf{g}_{m}^{H}\mathbf{w}_{m}|^{2}}{ \alpha |q|^{2}| \sum_{m=1}^{M} \mathbf{f}_{m}^{H}\mathbf{w}_{m}|^{2}+\sigma^{2}}.
\end{equation}

Note that due to the product of $c(n)$ and $s(n)$ in the second term of (\ref{eq:rn1}), 
the resulting noise for decoding $s(n)$  no longer follows Gaussian distribution.
However, 
by using the fact that for any given noise power, Gaussian noise results 
in the maximum entropy and hence constitutes the worst-case noise 
\cite{HassibiB2003a, NeeserFD1993a},  
the achievable  rate of the primary communication in (\ref{eq:rn1}) is  
\begin{equation}
  R_{s}=\log_{2}(1+\gamma_{s}).
\end{equation}

After decoding the primary information, the first term in (\ref{eq:rn1}) can be subtracted 
from the received
signal before decoding the secondary symbols $c(n)$. The resulting signal is
\begin{equation} \label{eq:4rc}
  \hat{r}_{c}(n)=\sqrt{\alpha} q \sum\nolimits_{m=1}^{M} 
  \mathbf{f}_{m}^{H}\mathbf{w}_{m} s(n)c(n)  +z(n).
\end{equation}

Note that since $s(n)$ varies across different secondary symbols $c(n)$, 
(\ref{eq:4rc}) can be interpreted as  a fast-fading
channel,  whose instantaneous 
channel gain depends on $|s(n)|^{2}$ \cite{ZhangQ2019a}. With $s(n)\sim \mathcal{CN}(0,1)$,
its squared envelope follows an exponential distribution. 
Therefore, the ergodic rate of the backscattering communication (\ref{eq:4rc}) can 
be expressed as \cite{RLong2020a, TseD2005a}
\begin{equation}\label{eq:Rc1}
  \begin{aligned}
    R_{c}  &=\mathbb{E}_{s(n)} \Big[ \log_{2}\big(1+\frac{\alpha|q|^{2}| \sum_{m=1}^{M} 
    \mathbf{f}_{m}^{H}\mathbf{w}_{m}|^{2}|s(n)|^{2} }{\sigma^{2}}\big)\Big]
    \\ &=\int_{0}^{\infty}\log_{2}(1+\beta_{c} x)e^{-x}dx
    \\ & =-e^{\frac{1}{\beta_{c}}}\mathrm{Ei}(-\frac{1}{\beta_{c}})\log_{2}e,
\end{aligned}
\end{equation}
where $\mathrm{Ei}(-\frac{1}{\beta_{c}}) $
is the exponential integral, and 
$\beta_{c}=\frac{\alpha|q|^{2}| \sum_{m=1}^{M} \mathbf{f}_{m}^{H}\mathbf{w}_{m}|^{2}}{\sigma^{2}}$ 
is the average received signal-to-noise ratio (SNR) of the backscatter link.

\section{ Rate-Region Characterization  with Perfect CSI}
In this section, under the assumption that perfect CSI is available  at the APs, 
we aim to  characterize the achievable rate-region of the active primary communication 
and passive secondary communication,  by optimizing the transmit beamforming 
of the $M$ APs. To this end, the beamforming optimization problem 
is formulated to 
maximize the ergodic rate of the backscattering communication in (\ref{eq:Rc1}), 
subject to a given targeting communication rate constraint 
for the primary communication. By varying the targeting communication rate, 
the complete Pareto boundary of the achievable communication rate-region can be obtained.  
The problem can be formulated as

\begin{subequations}\label{eq:OP1}
  \begin{align}
  \begin{split}
    \max   \limits_{ {\mathbf{w}_m, m=1,...,M}} \quad &
        R_{c}
  \end{split}\\
  \begin{split}
    {\rm s.t.} \quad &R_{s}  \geq R_{\mathrm{th}}\hfill ,
  \end{split}\\
  \begin{split}
    \quad  & \big \| {\mathbf{w}_m} \big \|^{2} \leq  P_{m},  \qquad  m=1,...,M,
  \end{split}
\end{align}
\end{subequations}
where $R_{\mathrm{th}}$ denotes  the given targeting threshold for 
the primary communication rate, and (\ref{eq:OP1}c) corresponds to the per-AP power constraint. 

It has been shown in \cite{RLong2020a} that the first-order derivative of the ergodic rate  $R_{c}$ 
in (\ref{eq:Rc1}) with respect to the average received SNR $\beta_{c}$
is non-negative. Therefore, $  R_{c} $ is a monotonically non-decreasing
function with respect to $\beta_{c}$. Thus, we may replace 
the objective function  of (\ref{eq:OP1}) by $\beta_{c}$. 
By further ignoring  those  
constant terms,  problem (\ref{eq:OP1}) can be equivalently written as


\begin{subequations}\label{eq:OP2}
  \begin{align}
  \begin{split}
        \max   \limits_{ {\mathbf{w} }} \quad &
        | 
        \mathbf{f}^{H} {\mathbf{w}}|^{2}  
  \end{split}\\
  \begin{split}
    {\rm s.t.}  \quad & \log_{2}\Big(1+ \frac{|  \mathbf{g}^{H}  {\mathbf{w}}|^{2}}
  {  \alpha |q|^{2}
|\mathbf{f}^{H}  {\mathbf{w}} |^{2} \!+\! \sigma^{2}}\Big) \geq R_{\mathrm{th}},
  \end{split}\\
  \begin{split}
    & \big\|\mathbf{w}[(m-1)N+1:mN] \big \| ^{2} \leq  P_{m}, \qquad  m=1,...,M,
  \end{split}
\end{align}
\end{subequations}
where we have defined the cascaded vectors as
$\mathbf{g}^{T}=[\mathbf{g}^{T}_{1},\mathbf{g}^{T}_{2},
...,\mathbf{g}^{T}_{M}]$,
$\mathbf{f}^{T}=[\mathbf{f}^{T}_{1},\mathbf{f}^{T}_{2},
...,\mathbf{f}^{T}_{M}]$
and $\mathbf{w}^{T}=[\mathbf{w}^{T}_{1},\mathbf{w}^{T}_{2},...,\mathbf{w}^{T}_{M}]$.

Before solving problem (\ref{eq:OP2}), we first study its feasibility property. 
Obviously, problem (\ref{eq:OP2}) will become infeasible if $R_{\mathrm{th}}$ is too large. 
It is not difficult to see that problem (\ref{eq:OP2}) is feasible if and only if $R_{\mathrm{th}}\leq\bar{R}_{s}$, 
  where $ \bar{R}_{s} $ is the optimal value to the following optimization problem
  \begin{subequations} \label{eq:OPBIS1}
    \begin{align}
    \begin{split}
      \max   \limits_{ {\mathbf{w} }} \quad &
      \log_{2}\Big(1+ \frac{|  \mathbf{g}^{H}  {\mathbf{w}}|^{2}}
      {  \alpha |q|^{2}
    |\mathbf{f}^{H}  {\mathbf{w}} |^{2} + \sigma^{2}}\Big)
    \end{split}\\
    \begin{split}
      {\rm s.t.} 
      \quad  & \big\|\mathbf{w}[(m-1)N+1:mN] \big \| ^{2} \leq  P_{m}, \qquad  m=1,...,M.
    \end{split}
  \end{align}
  \end{subequations}

Next, we consider solving problem (\ref{eq:OPBIS1}) to get the 
maximum achievable primary communication rate threshold $\bar{R}_{s}$
for problem (\ref{eq:OP2}) to be feasible. 
Note that the objective function of problem (\ref{eq:OPBIS1})  is 
nonconcave with respect to $\mathbf{w}$. Thus, problem (\ref{eq:OPBIS1}) cannot be 
efficiently solved directly with standard convex optimization technique. 
Fortunately, the efficient optimal solution can be 
obtained via  bisection method with convex optimization.
To this end,   
it is not difficult to  see that if $\mathbf{w}^{\star}$ is an optimal
solution to problem (\ref{eq:OPBIS1}), 
so is $\mathbf{w}^{\star}e^{j\phi}$ for any  phase rotation $\phi$. 
This is because any arbitrary phase rotation for the beamforming vector $\mathbf{w}^{\star}$ 
does not change the objective function in (\ref{eq:OPBIS1}a) nor the 
constraint in (\ref{eq:OPBIS1}b). Therefore, without loss of 
optimality to problem (\ref{eq:OPBIS1}), we may assume that
$\mathbf{g}^{H}  {\mathbf{w}}$ is a nonnegative real number, i.e.,  
 $\mathrm{Re}\{\mathbf{g}^{H}  {\mathbf{w}} \}\geq 0$, and   ${\rm Im}\{ 
\mathbf{g}^{H}  {\mathbf{w}}   \}=0$.
As a result, by further introducing a  slack variable $\mu$,
 problem (\ref{eq:OPBIS1}) can be equivalently written as

  \begin{subequations}\label{eq:OPBIS2}
    \begin{align}
    \begin{split}
      \max   \limits_{ {\mathbf{w}}} \quad &
          \mu
    \end{split}\\
    \begin{split}
      {\rm s.t.} \quad &\log_{2}\Big(1\!+ \!\frac{ \big( \mathrm{Re} \{\mathbf{g}^{H}  
      {\mathbf{w}}\}\big)^{2}}
    { \alpha |q|^{2}
     |\mathbf{f}^{H} {\mathbf{w}}  |^{2} 
     \!+\! \sigma^{2} }\Big)\! \geq \mu,
    \end{split}\\
    \begin{split}
      \quad  & \big\|\mathbf{w}[(m-1)N+1:mN] \big \| ^{2} \leq  P_{m}, \qquad  m=1,...,M ,
    \end{split}\\
    \begin{split}
      & \mathrm{Im} \{ \mathbf{g}^{H}  {\mathbf{w}}   \}=0.
     \end{split}
  \end{align}
  \end{subequations}

Furthermore, for any given $\mu$, we may formulate  the following feasibility problem

  \begin{subequations}\label{eq:OPBIS3}
    \begin{align}
    \begin{split}
      \mathrm{Find} \quad& {\mathbf{w} }  
    \end{split}\\
    \begin{split}
      {\rm s.t.} \quad & \big \|[\sigma , \sqrt{\alpha} q
      \mathbf{f}^{H} {\mathbf{w}}  ] \big \|_{2}\leq \frac{
    \mathrm{Re} \{\mathbf{g}^{H}  
      {\mathbf{w}}\}}
      {\sqrt{2^{\mu}-1}},
    \end{split}\\
    \begin{split}
      \quad  & \big\|\mathbf{w}[(m-1)N+1:mN] \big \| ^{2} \leq  P_{m}, \qquad  m=1,...,M ,
    \end{split}\\
    \begin{split}
      & \mathrm{Im} \{ \mathbf{g}^{H}  {\mathbf{w}}   \}=0.
     \end{split}
  \end{align}
  \end{subequations}

Note that (\ref{eq:OPBIS3}b) is equivalent to (\ref{eq:OPBIS2}b), 
which is expressed as an SOC constraint for any given $\mu$. 
Thus, problem (\ref{eq:OPBIS3}) is a SOC programming (SOCP)  
problem, which
can be efficiently solved by standard convex optimization technique 
or existing software tools such as CVX \cite{Grantm2008}. 
If problem (\ref{eq:OPBIS3}) is feasible, then the optimal value  $\bar{R}_{s}$ of (\ref{eq:OPBIS1}) 
satisfies $\bar{R}_{s} \geq \mu$; otherwise,  $\bar{R}_{s} < \mu$.
As a result, the optimal solution to problem (\ref{eq:OPBIS1}) can 
be obtained by solving the SOCP feasibility problem (\ref{eq:OPBIS3}), 
together with the efficient bisection method to update $\mu$,
which is summarized in Algorithm 1.

After obtaining $\bar{R}_{s}$ by optimally solving the 
optimization problem (\ref{eq:OPBIS1}) with Algorithm 1, 
we consider the optimization problem (\ref{eq:OP2}) for any given rate threshold 
$R_{\mathrm{th}}\leq\bar{R}_{s}$, so that (\ref{eq:OP2}) is guaranteed to be feasible. 
We first show that when $R_{\mathrm{th}}$ is small enough, 
the optimal solution to problem (\ref{eq:OP2}) can be obtained in closed-form, 
which is stated in the following Theorem.  

\begin{theorem}
 When $R_{\mathrm{th}}\leq\widehat{R}_{s}$, where

\begin{equation}\label{eq:Rth<widehideR}
  \widehat{R}_{s} \triangleq \log_{2}(1+\frac{ \big |\sum_{m=1}^{M}\sqrt{P_{m}} 
    { \frac{ {\mathbf{g}}^{H}_{m}   {\mathbf{f}}_{m}}
    {\Vert  {\mathbf{f}}_{m} \Vert} } \big |^{2}}{ \alpha |q|^{2} \big| \sum_{m=1}^{M}\sqrt{P_{m}}
    {\Vert {\mathbf{f}}_{m} \Vert}\big |^{2}+\sigma^{2}}),
\end{equation}  
the optimal solution and optimal objective value to problem (\ref{eq:OP2}) 
can be obtained in closed-form as 
\begin{align}
  & \mathbf{ w}_{m}^{\star}=\sqrt{P_{m}} \frac{{\mathbf{f}}_{m}}
      {\Vert {\mathbf{f}}_{m} \Vert} , \qquad m=1,...,M,\label{eq:wMRT}\\
      &| \mathbf{f}^{H} {\mathbf{w}}^{\star}|^{2}   =\big ( \sum_{m=1}^{M} \sqrt{P_{m}}
\Vert {\mathbf{f}}_{m} \Vert  \big)^{2}.\label{eq:fw}
\end{align}

  \label{thm-1}
\end{theorem}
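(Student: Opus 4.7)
The plan is to prove Theorem~\ref{thm-1} by first deriving a universal upper bound on the objective $|\mathbf{f}^{H}\mathbf{w}|^{2}$ that depends only on the per-AP power budgets $\{P_{m}\}$, then showing that the candidate beamformer (\ref{eq:wMRT}) attains this upper bound, and finally verifying that whenever $R_{\mathrm{th}}\leq \widehat{R}_{s}$ the candidate is feasible for problem (\ref{eq:OP2}). Since the upper bound is oblivious to the rate constraint (\ref{eq:OP2}b), any feasible beamformer that meets it must be globally optimal.

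For the upper bound, I would write $\mathbf{f}^{H}\mathbf{w}=\sum_{m=1}^{M}\mathbf{f}_{m}^{H}\mathbf{w}_{m}$ and chain three elementary inequalities: the triangle inequality $|\sum_{m}\mathbf{f}_{m}^{H}\mathbf{w}_{m}|\leq \sum_{m}|\mathbf{f}_{m}^{H}\mathbf{w}_{m}|$, the Cauchy--Schwarz inequality $|\mathbf{f}_{m}^{H}\mathbf{w}_{m}|\leq\|\mathbf{f}_{m}\|\|\mathbf{w}_{m}\|$, and finally the per-AP power constraint (\ref{eq:OP2}c) giving $\|\mathbf{w}_{m}\|\leq\sqrt{P_{m}}$. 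This yields
\begin{equation*}
|\mathbf{f}^{H}\mathbf{w}|^{2}\leq\Big(\sum_{m=1}^{M}\sqrt{P_{m}}\,\|\mathbf{f}_{m}\|\Big)^{2},
\end{equation*}
matching (\ref{eq:fw}). I would then record the three conditions under which equalities hold simultaneously, namely that each $\mathbf{w}_{m}$ is co-linear with $\mathbf{f}_{m}$, that the scalars $\mathbf{f}_{m}^{H}\mathbf{w}_{m}$ share a common phase, and that each $\|\mathbf{w}_{m}\|=\sqrt{P_{m}}$. These are precisely satisfied by the maximum-ratio choice $\mathbf{w}_{m}^{\star}=\sqrt{P_{m}}\,\mathbf{f}_{m}/\|\mathbf{f}_{m}\|$ in (\ref{eq:wMRT}), so the bound is tight and (\ref{eq:fw}) follows.

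Next, I would substitute $\mathbf{w}^{\star}$ into the SINR expression (\ref{eq:gammas}) to obtain
\begin{equation*}
\gamma_{s}(\mathbf{w}^{\star})=\frac{\big|\sum_{m=1}^{M}\sqrt{P_{m}}\,\mathbf{g}_{m}^{H}\mathbf{f}_{m}/\|\mathbf{f}_{m}\|\big|^{2}}{\alpha|q|^{2}\big(\sum_{m=1}^{M}\sqrt{P_{m}}\,\|\mathbf{f}_{m}\|\big)^{2}+\sigma^{2}},
\end{equation*}
so that $R_{s}(\mathbf{w}^{\star})=\log_{2}(1+\gamma_{s}(\mathbf{w}^{\star}))=\widehat{R}_{s}$ as defined in (\ref{eq:Rth<widehideR}). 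Consequently, whenever $R_{\mathrm{th}}\leq\widehat{R}_{s}$, the beamformer $\mathbf{w}^{\star}$ satisfies the primary rate constraint (\ref{eq:OP2}b), and by construction also the per-AP power constraints (\ref{eq:OP2}c), so it is feasible. Combining feasibility with the fact that $\mathbf{w}^{\star}$ attains the unconstrained upper bound on the objective closes the argument.

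There is no real obstacle in the argument; the delicate point is simply to keep track of all three equality conditions (phase alignment across $m$, Cauchy--Schwarz within each $m$, and full power use) so as to justify that the triple inequality is simultaneously tight only at $\mathbf{w}^{\star}$ up to a global phase rotation, which does not affect the objective. Uniqueness up to that irrelevant phase also explains why the threshold $\widehat{R}_{s}$ is exactly the rate delivered by the BD-matched beamformer: above it, maximizing $|\mathbf{f}^{H}\mathbf{w}|^{2}$ conflicts with the primary rate constraint and a non-trivial tradeoff (handled by the SCA method described later) must be invoked.
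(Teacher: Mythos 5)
Your proposal is correct and follows essentially the same route as the paper's Appendix A: the paper relaxes problem (\ref{eq:OP2}) by dropping the rate constraint, asserts that per-AP MRT with full power is optimal for the relaxed problem, and then argues that feasibility under $R_{\mathrm{th}}\leq\widehat{R}_{s}$ plus optimality for the larger feasible set implies optimality for the original problem. The only difference is that you explicitly justify the step the paper leaves as ``not difficult to see'' by chaining the triangle, Cauchy--Schwarz, and power-budget inequalities and checking the equality conditions, which is a welcome but not structurally different addition.
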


\begin{proof}
Please refer to Appendix A. 
\end{proof}
$\hfill\blacksquare$

Under the condition of Theorem 1 and with the closed-form optimal solution 
to problem (\ref{eq:OP2}) given in (\ref{eq:wMRT}), the resulting 
primary communication rate $\widehat{R}_{s}$ is given in (\ref{eq:Rth<widehideR}).
Furthermore, by substituting (\ref{eq:fw}) into (\ref{eq:Rc1}), 
the secondary communication rate $ \widehat{R}_{c}$ is obtained 
in closed-form as
\begin{align}\label{eq:Rcwidehat}
     \widehat{R}_{c}&=-e^{\frac{1}{\widehat{\beta}_{c}}}\mathrm{Ei}(-\frac{1}{\widehat{\beta}_{c}})\log_{2}e,
\end{align}
where  $\widehat{\beta}_{c}=\frac{\alpha|q|^{2}  \big | \sum_{m=1}^{M} \sqrt{P_{m}}
{\Vert {\mathbf{f}}_{m} \Vert} \big |^{2}}{\sigma^{2}}$ is the average received SNR of the
backscatter link.

\begin{algorithm}[t]
  \caption{Optimal solution to problem (\ref{eq:OPBIS1})  }
  \label{alg:algorithm-1}
  \hspace*{0.02in}{\textbf{Input:  }}The channel coefficients $q \mathbf{f}$
  and $\mathbf{h}$, noise power $\sigma^{2}$, 
   power reflection coefficient $\alpha$, termination threshold $\kappa_{1}$, 
   maximum transmit power    $P_{m}, m=1,...,M$.\\
  \hspace*{0.02in}{\textbf{Output:  }} The optimal   solution  $\mathbf{w}^{\star}$ and
  the maximum primary communication rate $\bar{R}_{s}$ to problem (\ref{eq:OPBIS1}).
  \begin{algorithmic}[1]
      \State Initialization: $\mu_{\mathrm{min}}=0$, and $\mu_{\mathrm{max}}$ 
      to a sufficiently large value.
      \While{$\mu_{ \mathrm{max}} -\mu_{\mathrm{min} }  > \kappa_{1}{\mu_{\mathrm{min}}  }$}
        \State $\mu=\frac{\mu_{\mathrm{min}}+\mu_{\mathrm{max}}}{2}$.
        \State For the given $\mu$, solve the feasibility problem  (\ref{eq:OPBIS3}).
        \If  {problem (\ref{eq:OPBIS3}) is feasible and let $\mathbf{w}^{*}$ denote its solution,}
        \State $\mu_{\mathrm{min}}=\mu,\bar{R}_{s}=\mu, \mathbf{w}^{\star}= \mathbf{w}^{*}$.
        \Else
        \State $\mu_{\mathrm{max}}=\mu $.
        \EndIf 
      \EndWhile
        \State Output $\bar{R}_{s}$ and $\mathbf{w}^{\star}$.
  \end{algorithmic}
\end{algorithm}

With the above discussions, the remaining task for solving problem (\ref{eq:OP2}) 
is to consider the case $\widehat{R}_{s} < R_{\mathrm{th}} \leq \bar{R}_{s} $. 
In this case, due to the non-concave objective function (\ref{eq:OP2}a) and 
the nonconvex constraint (\ref{eq:OP2}b), problem (\ref{eq:OP2}) is non-convex. 
Thus, it is difficult to find the optimal solution efficiently. 
Fortunately, an efficient Karush–Kuhn–Tucker (KKT) local optimal solution can be obtained by 
using the SCA technique. Towards this end, it is first observed that 
similar to (\ref{eq:OPBIS3}b), without loss of optimality, 
the rate constraint in (\ref{eq:OP2}b) can be written as a SOC constraint. 
Thus, problem (\ref{eq:OP2}) can be equivalently written as

\begin{subequations}\label{eq:OPSCA2}
  \begin{align}
  \begin{split}
        \max   \limits_{\mathbf{w}} \quad & 
        |  \mathbf{f}^{H} {\mathbf{w}} |^{2}
  \end{split}\\
\begin{split}
  {\rm s.t.}  \quad & \big \|[\sigma, \sqrt{\alpha} q
  \mathbf{f}^{H} {\mathbf{w}}  ] \big \|_{2}\leq \frac{
\mathrm{Re} \{\mathbf{g}^{H}  
  {\mathbf{w}}\}}
  {\sqrt{2^{R_{\mathrm{th}}}-1}},
\end{split}\\
  \begin{split}
    & \big\|\mathbf{w}[(m-1)N+1:mN]\big\| ^{2} \leq  P_{m}, \qquad  m=1,...,M,  
  \end{split}\\
  \begin{split}
    &  {\rm Im}\{\mathbf{g}^{H}  {\mathbf{w}}   \}=0.
  \end{split}
\end{align}
\end{subequations}

Problem (\ref{eq:OPSCA2}) is still non-convex as the objective function   
(\ref{eq:OPSCA2}a) is a convex  function with respect to $\mathbf{w}$, 
the maximization of which is a non-convex optimization problem. 
To address this issue, the SCA technique is 
applied to find a KKT local optimal solution iteratively \cite{zeng2017energy,Zappone2017a,Marks1978a}.
Specifically,  consider the current iteration $l$, in 
which the local point $\{\mathbf{w}^{(l)} \} $ is obtained in the previous iteration.
Define $F(\mathbf{w})=
| \mathbf{f}^{H}  {\mathbf{w}}|^{2}$, which is a convex differentiable function
with respect to $\mathbf{w}$. 
By using the fact that the first-order Taylor expansion of a
convex differentiable function provides a global lower bound 
\cite{hjorungnes2011complex,Boyds2004a}, we have 

  \begin{equation}\label{eq:SCA}
    \begin{aligned}
    F ({\mathbf{w}} )
    & \geq F(\mathbf{w}^{(l)})+ 2{\rm Re} \left\{
      {{\mathbf{w}}^{(l)}}^{H} 
          \mathbf{f} 
          \mathbf{f}^{H} 
          (
      \mathbf{w}-
      \mathbf{w}^{(l)})
    \right\}
    \\ & \triangleq  F_{ {\rm low} }({\mathbf{w}}|{\mathbf{w}^{(l)}}), \forall \mathbf{w}.
  \end{aligned}
  \end{equation}

Therefore, by replacing the objective function in (\ref{eq:OPSCA2}a) with its global
lower bound in (\ref{eq:SCA}), 
we have
the following optimization problem
\begin{subequations}\label{eq:OPSCA3}
  \begin{align}
  \begin{split}
        \max   \limits_{ {\mathbf{w}}} \quad &
        | \mathbf{f}^{H}  {\mathbf{w}^{(l)}}|^{2} + 2{\rm Re} \left\{
      {{\mathbf{w}}^{(l)}}^{H} 
          \mathbf{f} 
          \mathbf{f}^{H} 
          (
      \mathbf{w}-
      \mathbf{w}^{(l)})
    \right\}
  \end{split}\\
  \begin{split}
    {\rm s.t.}  \quad & \big \|[\sigma, \sqrt{\alpha}q
   \mathbf{f}^{H} {\mathbf{w}}  ] \big \|_{2}\leq \frac{
  \mathrm{Re} \{\mathbf{g}^{H}  
    {\mathbf{w}}\}}
    {\sqrt{2^{R_{\mathrm{th}}}-1}},
  \end{split}\\
  \begin{split}
    &\big \|\mathbf{w}[(m-1)N+1:mN]\big\| ^{2} \leq  P_{m}, \qquad  m=1,...,M,   
   \end{split}\\
  \begin{split}
    &  {\rm Im}\{\mathbf{g}^{H}  {\mathbf{w}}  \}=0.
  \end{split}
\end{align}
\end{subequations}

For any given local point $\mathbf{w}^{(l)}$, 
the objective function of (\ref{eq:OPSCA3}a) is 
a concave affine function of the optimization variable $\mathbf{w}$, 
and all constraints 
are convex. Therefore, problem (\ref{eq:OPSCA3}) is a convex optimization problem, 
which can be efficiently solved with standard convex optimization techniques 
or readily available software toolboxes, such as CVX \cite{Grantm2008}.
Thanks to the global lower bound in (\ref{eq:SCA}), the optimal objective value of
the convex optimization problem (\ref{eq:OPSCA3}) provides at least a lower 
bound to that of the non-convex optimization problem (\ref{eq:OPSCA2}). 
By successively updating the local point $\mathbf{w}^{(l)}$ and 
solving (\ref{eq:OPSCA3}), a monotonically non-decreasing objective value of 
(\ref{eq:OPSCA2}) can be obtained. The algorithm is summarized in Algorithm 2.


\begin{algorithm}[t]
  \caption{ SCA for problem (\ref{eq:OPSCA2})  }
  \label{alg:algorithm-2}
  \hspace*{0.02in}{\textbf{Input:  }}The channel coefficients $q \mathbf{f} $
  and $\mathbf{h} $,   noise power $\sigma^{2}$, 
   power reflection coefficient $\alpha$,  maximum  transmit power $P_{m}, m=1,...,M,$ 
   rate threshold $R_{\mathrm{th}}$ and termination threshold $\kappa_{2}$. \\
  \hspace*{0.02in}{\textbf{Output:  }} The beamforming solution  $\mathbf{w}^{\star}$.
  \begin{algorithmic}[1]
      \State Initialization: set the iteration number $l=0$, and 
      and initialize $\mathbf{w}^{(0)}$, so that it is feasible to (\ref{eq:OPSCA2}).
      \Repeat
        \State For the given local point $\mathbf{w}^{(l)}$, 
        solve the convex optimization problem (\ref{eq:OPSCA3})
        and denote the optimal solution as $\mathbf{w}^{\star(l)}$.
        \State Update the local point with $\mathbf{w}^{(l+1)}=\mathbf{w}^{\star(l)}$.
        \State Update $l=l+1$,
      \Until the fractional increase of the objective value of (\ref{eq:OPSCA2}) 
      is below the threshold $\kappa_{2}$.
  \end{algorithmic}
\end{algorithm}

Let  $F(\mathbf{w}^{\star(l)})=|  \mathbf{f}^{H} {\mathbf{w}^{\star(l)}} |^{2}$
denote the objective value of problem (\ref{eq:OPSCA2}) with the beamforming vector
obtained during the $(l)$-th iteration of Algorithm 2. We have the following
Lemma:
\begin{lemma} 
  \label{lem-1}
  \normalfont  The value $F(\mathbf{w}^{\star(l)})$ obtained 
  during each iteration of Algorithm 2 is monotonically non-decreasing, i.e.,
  $F(\mathbf{w}^{\star(l+1)}) \geq F(\mathbf{w}^{\star(l)}), \forall l$.
  Besides, the sequence $\{ \mathbf{w}^{\star(l)}  \}, l=1,2,\cdots,$ 
  converges to a KKT solution of the 
  original non-convex problem (\ref{eq:OPSCA2}).
  \end{lemma}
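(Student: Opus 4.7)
The plan is to first establish the monotonicity claim via a direct sandwich argument that exploits the tangency and global-lower-bound properties of the first-order Taylor surrogate, and then invoke the standard SCA/MM convergence theorem of Marks and Wright \cite{Marks1978a} to conclude that every limit point of $\{\mathbf{w}^{\star(l)}\}$ is a KKT point of (\ref{eq:OPSCA2}).

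First I would record three properties of the surrogate $F_{\rm low}(\mathbf{w}|\mathbf{w}^{(l)})$ that follow directly from its construction as the first-order Taylor expansion of the convex differentiable function $F$: (i) value matching, $F_{\rm low}(\mathbf{w}^{(l)}|\mathbf{w}^{(l)})=F(\mathbf{w}^{(l)})$; (ii) gradient matching, $\nabla F_{\rm low}(\mathbf{w}^{(l)}|\mathbf{w}^{(l)})=\nabla F(\mathbf{w}^{(l)})$; and (iii) the global lower bound $F(\mathbf{w})\geq F_{\rm low}(\mathbf{w}|\mathbf{w}^{(l)})$ for all $\mathbf{w}$, which is exactly (\ref{eq:SCA}). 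Crucially, the constraints (\ref{eq:OPSCA3}b)-(\ref{eq:OPSCA3}d) do not depend on $\mathbf{w}^{(l)}$ and coincide with the constraints (\ref{eq:OPSCA2}b)-(\ref{eq:OPSCA2}d); hence the previous iterate $\mathbf{w}^{(l+1)}=\mathbf{w}^{\star(l)}$ is automatically feasible for (\ref{eq:OPSCA3}) at iteration $l+1$.

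Monotonicity then follows from a short chain. By optimality of $\mathbf{w}^{\star(l+1)}$ for (\ref{eq:OPSCA3}) with local point $\mathbf{w}^{(l+1)}=\mathbf{w}^{\star(l)}$, one has $F_{\rm low}(\mathbf{w}^{\star(l+1)}|\mathbf{w}^{(l+1)})\geq F_{\rm low}(\mathbf{w}^{(l+1)}|\mathbf{w}^{(l+1)})$, and property (i) identifies the right-hand side with $F(\mathbf{w}^{\star(l)})$. Applying (iii) at $\mathbf{w}=\mathbf{w}^{\star(l+1)}$ then gives $F(\mathbf{w}^{\star(l+1)})\geq F_{\rm low}(\mathbf{w}^{\star(l+1)}|\mathbf{w}^{(l+1)})\geq F(\mathbf{w}^{\star(l)})$, which is the desired non-decreasing property. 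Moreover, the per-AP power constraints render the feasible set compact and $F$ is continuous, so the non-decreasing sequence $\{F(\mathbf{w}^{\star(l)})\}$ is bounded above and therefore converges.

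For the KKT claim, I would extract by Bolzano--Weierstrass a convergent subsequence $\mathbf{w}^{\star(l_k)}\to\mathbf{w}^{\infty}$ from the compact feasible set and apply the SCA convergence theorem \cite{Marks1978a,Zappone2017a,zeng2017energy}: whenever the surrogate satisfies (i)-(iii) and a constraint qualification such as Slater's condition holds for (\ref{eq:OPSCA3}) (true in the feasible regime $R_{\rm th}<\bar{R}_{s}$), every limit point of the SCA iterates is a KKT point of the original non-convex problem (\ref{eq:OPSCA2}). The hard part will be the rigorous passage from the SOCP stationarity of $\mathbf{w}^{\star(l_k)}$ to stationarity of (\ref{eq:OPSCA2}) at $\mathbf{w}^{\infty}$: one must argue that the Lagrange multipliers associated with (\ref{eq:OPSCA3}) remain bounded along the subsequence so that they themselves admit a convergent subsequence, and then use the gradient-matching property (ii) together with continuity of $\nabla F$ to identify the limiting stationarity equation with the KKT condition of (\ref{eq:OPSCA2}); this is precisely where the three Marks--Wright surrogate conditions are used in combination rather than individually.
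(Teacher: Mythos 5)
Your proposal is correct and follows essentially the same route as the paper's Appendix B: the identical sandwich chain $F(\mathbf{w}^{\star(l+1)})\geq F_{\rm low}(\mathbf{w}^{\star(l+1)}|\mathbf{w}^{(l+1)})\geq F_{\rm low}(\mathbf{w}^{\star(l)}|\mathbf{w}^{(l+1)})=F(\mathbf{w}^{\star(l)})$ built on the tangency and global-lower-bound properties of the Taylor surrogate, followed by the gradient-matching property and an appeal to the standard SCA/Marks--Wright convergence result for the KKT claim. Your added detail on compactness, subsequence extraction, and multiplier boundedness is a more careful elaboration of the final step that the paper handles only by citation, but it is not a different argument.
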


\begin{proof}
  Please refer to Appendix B.
\end{proof}  
$\hfill\blacksquare$


\section{Channel Estimation and Rate-Region Characterization  with Imperfect CSI}

Note that the above analysis is based on the assumption of perfect CSI on 
$\mathbf{f}_{m}$, $\mathbf{g}_{m},m=1,...,M$, and $q$. In practical wireless 
communication systems, these channels need to be acquired via e.g., 
pilot-based channel estimation. 
In the following, we propose an efficient   channel estimation method for cell-free symbiotic 
radio systems based on two-phase uplink training. Furthermore,  the achievable rates
taking into account the channel estimation errors are derived, and 
the beamforming optimization problem is revisited  with imperfect CSI
to characterize the achievable rate-region with imperfect CSI.

In the first phase of the proposed channel estimation method, 
pilot  symbols are sent by the receiver while the BD is muted, so as to estimate
the direct-link channels $\mathbf{g}_{m}, m=1,...,M$. In the second phase, pilots are 
sent both by the receiver and the BD, so that, together with the estimation of the direct-link 
channels $\mathbf{g}_{m}$, the cascaded backscatter channels $q\mathbf {f}_{m}$, are 
estimated. The details are elaborated in the following.

\subsection{Direct-Link Channel Estimation  }

First, we discuss the uplink training-based estimation of the direct-link channels between 
the receiver and the $M$ APs. Denote by $\tau_{1}$ the length of the uplink training sequence, and
let $P_{t}$ be the training power. Further denote by $\boldsymbol{\varphi}_{1}
\in \mathbb{C}^{\tau_{1}\times 1}$ the pilot sequence, where $\|\boldsymbol{\varphi}_{1}\|^{2}= \tau_{1}$. 
The received training signals by   $N$ antennas of the $m$th AP over the $\tau_{1}$ symbol 
durations, which is denoted as $\mathbf{Y}'_{m}\in  \mathbb{C}^{ N \times \tau_{1}} $, can be written as
\begin{equation}
  \mathbf{Y}'_{m}=\sqrt{P_{t}}  \mathbf{g}_{m} \boldsymbol{\varphi}_{1}^{H} 
  +\mathbf{Z}'_{m},
  \qquad m=1,...,M,
\end{equation}
where $\mathbf{Z}'_{m}$ denotes the i.i.d CSCG noise with zero-mean and power $\sigma^2$. 
With the pilot sequence $\boldsymbol{\varphi}_{1}$ known at the APs,  
$\mathbf{Y}'_{m}$ can be projected to $\boldsymbol{\varphi}_{1}$, which gives  
\begin{equation}
  \check{\mathbf{y}}'_{m}=\frac{1}{\sqrt{P_{t}}} \mathbf{Y}'_{m} \boldsymbol{\varphi}_{1}= \tau_{1} \mathbf{g}_{m}
   +\frac{1}{\sqrt{P_{t}}} \hat{{\mathbf{z}}}'_{m},  
\end{equation}
where $\hat{\mathbf{z}}^{'}_{m}\triangleq\mathbf {Z}^{'}_{m} \boldsymbol{\varphi}_{1}$ 
is the resulting noise vector.  It can be shown that $\hat{\mathbf{z} }^{'}_{m}$
is i.i.d. CSCG noise with power $\tau_{1}\sigma^2$, i.e., $\hat {\mathbf {z}}'_{m} \sim 
\mathcal{CN}(\mathbf{0} , \tau_{1}\sigma^{2} \mathbf I_N)$.

With $\mathbf g_{m}$ being a zero-mean random vector, its linear minimum mean square 
error estimation (LMMSE),
denoted by $\hat{\mathbf{g}}_{m}\in  \mathbb{C}^{N \times 1}$,  is \cite{KayS1993a}
\begin{equation}\label{eq:ghat1}
  \begin{aligned}
    \hat{\mathbf{g}}_{m}  &=\mathbb{E}[ \mathbf{g}_{m} \check{\mathbf{y}}^{'H}_{m} ] 
    \big(\mathbb{E}[\check{\mathbf{y}}'_{m} \check{\mathbf{y}}^{'H}_{m}]\big)^{-1}  \check{\mathbf{y}}'_{m} 
    \\ &=    \mathbf{R}_{\mathbf{g},m} (  \tau_{1}   \mathbf{R}_{\mathbf{g},m}+
    \frac{\sigma^2}{P_{t} }\mathbf{I}_{N})^{-1}\check{\mathbf{y}}_{m},
  \end{aligned} 
\end{equation} 
where $\mathbf{R}_{\mathbf{g},m}=\mathbb{E}[\mathbf{g}_{m}\mathbf{g}_{m}^{H}]$
denotes the   covariance matrix of $\mathbf g_{m}$. 
By further  decomposing the direct-link channel as $\mathbf g_{m}=\sqrt{b_{m}}\mathbf d_{m}$, 
with $b_{m}$ denoting the large-scale channel coefficient, and $\mathbf d_{m}\in \mathbb{C}^{N\times 1}$ denoting 
the zero-mean CSCG small-scale fading component, i.e., 
$ \mathbf{d}_m \sim \mathcal{CN}(\mathbf{0}, \mathbf I_N)$. Then $\mathbf g_{m}$ is 
CSCG distributed with covariance matrix 
$\mathbf{R}_{\mathbf{g},m}=b_{m} \mathbf I_{N}$, and thus LMMSE estimation is
also the optimal MMSE estimation. In this case, (\ref{eq:ghat1}) can be simplified as
\begin{equation}
  \hat{\mathbf{g}}_{m}=\frac{P_{t}b_{m}}{P_{t} \tau_{1} b_{m}+\sigma^2}\check{\mathbf{y}}'_{m}.  
\end{equation}

It can be shown that $\hat{\mathbf g}_{m}$ follows the distribution
\begin{equation}
  \hat{\mathbf{g}}_{m} \sim \mathcal{CN}(\mathbf{0},\frac{e_{1} b_{m}^{2}}{1+e_{1}  b_{m}}\mathbf{I}_{N}),
\end{equation}
where we have defined the transmit  training energy-to-noise ratio (ENR) as 
$  e_{1}\triangleq \frac{P_t\tau_1}{\sigma^2}$.

Let $\tilde{\mathbf{g}}_{m}$  denote the channel estimation error of the $m$th AP, i.e.,
$\tilde{\mathbf{g}}_{m}=\mathbf{g}_{m}-\hat{\mathbf{g}}_{m}$. 
With MMSE estimation, it is known that $\tilde{\mathbf{g}}_{m}$ is uncorrelated with 
$\hat{\mathbf{g}}_{m}$ \cite{KayS1993a}, which follows the distribution
\begin{equation}\label{eq:gtilde}
  \tilde{\mathbf{g}}_{m} \sim \mathcal{CN}(\mathbf{0},\frac{ b_{m}}{1+e_{1} b_{m}}\mathbf{I}_{N}).
\end{equation}

It is observed from (\ref{eq:gtilde}) that as the transmit training ENR $e_{1}$ increases, the variance of 
the channel estimation error reduces, as expected.

\subsection{Backscatter Channel Estimation  }
With the estimation $\hat{\mathbf g}_{m}$ for the direct-link channels
obtained in the first phase, in  the second phase, 
pilot symbols are sent from both the receiver and the
BD to estimate the cascaded passive backscatter channels $q\mathbf{f}_{m}$, $m=1,...,M$. 
Let $\tau_{2}$ denote the length of the training sequence in the second phase and 
$\boldsymbol{\varphi}_{2} \in  \mathbb{C}^{\tau_{2}\times 1}$ be the pilot sequence
sent by the  receiver with $\|\boldsymbol{\varphi}_{2}\|^2=\tau_{2}$. The
received training signal by the $m$th AP can be written as
\begin{equation}\label{eq:Ystar}
  \begin{aligned}
    \mathbf{Y}^{\star}_{m}& = \sqrt{P_{t} \alpha}q  \mathbf{f}_{m} \boldsymbol{\varphi}_{2}^{H}+
                \sqrt{P_{t} } (\mathbf{\hat{g}}_{m}+\mathbf{\tilde{g}}_{m}) \boldsymbol{\varphi}_{2}^{H} 
                +\mathbf{Z}''_{m},
  \end{aligned}
\end{equation}
where $\mathbf{Z}''_{m}$ denotes the i.i.d. CSCG noise with power $\sigma^2$. Note that without loss 
of generality, we assume that the pilot symbols backscattered by the BD are all equal to $1$. 
After subtracting  the terms related to the estimation  $ \hat{\mathbf{g}}_{m}$ 
of the direct-link channels from (\ref{eq:Ystar}), we have
\begin{equation}
  \begin{aligned}
    \mathbf{Y}''_{m}& = \sqrt{P_{t} \alpha}q  \mathbf{f}_{m} \boldsymbol{\varphi}_{2}^{H}+
                \sqrt{P_{t} } \tilde{\mathbf{g}}_{m} \boldsymbol{\varphi}_{2}^{H} 
                +\mathbf{Z}''_{m}.
  \end{aligned}
\end{equation}

With $\boldsymbol{\varphi}_{2}$ known at the APs,  the projection of $\mathbf{Y}''_{m}$ 
after scaling by $\frac{1}{\sqrt{P_t \alpha}}$,  is
\begin{equation}\label{eq:y..m}
  \check{\mathbf{y}}''_{m}=\frac{1}{\sqrt{P_{t} \alpha}}\mathbf{Y}''_{m} \boldsymbol{\varphi}_{2}
  =     \tau_{2}   \mathbf{h}_{m} + \frac{\tau_{2}}{\sqrt{ \alpha}} \tilde{\mathbf{g}}_{m}
    +\frac{1}{\sqrt{P_{t}\alpha} } \hat{{\mathbf{z}}}''_{m} ,
\end{equation}
where  the cascaded backscatter channel is defined  as $\mathbf{h}_{m}=
q  \mathbf{f}_{m}  $, and $\hat{{\mathbf{z}}}''_{m}\triangleq\mathbf {Z}''_{m} \boldsymbol{\varphi}_{2}$. 
It can be shown that $\hat{\mathbf{z} }''_{m} \sim 
\mathcal{CN}(\mathbf{0} , \tau_{2}\sigma^2 \mathbf I_N)$. 

Let $\mathbf{R}_{\mathbf{h},m}=\mathbb{E}[ \mathbf{h}_{m} \mathbf{h}_{m}^{H}]
$ denote the covariance matrix of the cascaded backscatter channel $\mathbf {h}_{m}$. 
Then the LMMSE estimation of $\mathbf {h}_{m}$ based on (\ref{eq:y..m}) is 
\begin{equation}\label{eq:hhat1}
  \begin{aligned}
    \hat{\mathbf{h}}_{m}  &=\mathbb{E}[     \mathbf{h}_{m}  \check{\mathbf{y}}^{''H} _{m} ] 
    \big(\mathbb{E}[\check{\mathbf{y}}''_{m} \check{\mathbf{y}}^{''H}_{m}]\big)^{-1}  \check{\mathbf{y}}''_{m} 
    \\ &=     \mathbf{R}_{\mathbf{h},m} 
    ( \tau_{2}  \mathbf{R}_{\mathbf{h},m} + \frac{\tau_{2}}{  \alpha } 
    \mathbf{R}_{\tilde{\mathbf{g}},{m}}+
    \frac{\sigma^{2}}{P_{t}\alpha} \mathbf{I}_{N})^{-1}\check{\mathbf{y}}''_{m},
  \end{aligned} 
\end{equation} 
where $\mathbf{R}_{\tilde{\mathbf{g}},{m}}=\mathbb{E} [\mathbf{\tilde{g}}_{m}
\mathbf {\tilde{g}}_{m}^{H}]$ is the covariance matrix of $\tilde {\mathbf {g}}_{m}$.     

If the channel coefficients in $\mathbf {f}_{m}$ are i.i.d.  distributed with 
variance $\zeta_{m}$, we then have 
$\mathbf R_{\mathbf{h},m}=\mathbb{E}[|q|^2\mathbf f_{m} 
\mathbf f_m^H]=\epsilon_{m} \mathbf I_N$,
where $\epsilon_{m}=\mathbb{E}[|q|^2] \zeta_{m}$.

Therefore, (\ref{eq:hhat1}) can be simplified as
\begin{equation}\label{eq:hhat2}
  \hat{\mathbf{h}}_{m} =  \frac{\alpha P_{t} \epsilon_{m}}{\alpha P_{t} \tau_{2} \epsilon_{m} +
    \frac{P_{t}\tau_{2}b_{m}}{1+e_{1}b_{m}}+\sigma^2} \check{\mathbf{y}}''_{m}.
\end{equation}

Define the transmit training ENR in the second phase as $e_{2}\triangleq \frac{P_{t}\tau_{2}}{\sigma^2}$.
Then with (\ref{eq:gtilde}) and (\ref{eq:hhat2}), we have
\begin{equation}
  \begin{aligned}
    \mathbf{R}_{\hat{\mathbf{h}},m} &=\mathbb{E}[ \hat{\mathbf{h}}_{m}  \hat{\mathbf{h}}_{m}^{H}]
     =\frac{\alpha e_{2}   \epsilon_{m}^{2}} { \alpha e_{2}  \epsilon_{m}
    +\frac{e_{2}b_{m}}{1+e_{1}b_{m}}+1}\mathbf{I}_{N}.
  \end{aligned}
\end{equation}

Let $\tilde{\mathbf h}_{m}=\mathbf{h}_{m}-\hat {\mathbf h}_{m}$ 
denote the estimation error. We  have
\begin{equation}\label{eq:Rhtilde}
  \begin{aligned}
    \mathbf{R}_{\tilde{\mathbf{h}},m} &
    \triangleq \mathbb{E}\big[(\mathbf {h}_{m}-\hat {\mathbf h}_{m})(\mathbf {h}_{m}-\hat {\mathbf h}_{m})^{H}\big]
      \\ &=\mathbf{R}_{\mathbf{h},m} - \mathbf{R}_{\hat{\mathbf{h}},m}
  \\ &=\frac{\epsilon_{m} (\frac{e_{2}b_{m}}{1+e_{1}b_{m}}+1)}
  {\alpha e_{2} \epsilon_{m}
  +\frac{e_{2}b_{m}}{1+e_{1}b_{m}}+1}\mathbf{I}_{N}.
  \end{aligned} 
\end{equation}

It follows from (\ref{eq:gtilde}) that if $e_{1} \rightarrow \infty$, in which case the direct-link channel 
$\mathbf g_m$  is perfectly estimated without any error, the variance of the 
estimation error in (\ref{eq:Rhtilde}) reduces to the same form as that in (\ref{eq:gtilde}). 

\subsection{Achievable Rate Analysis}
In this subsection, we derive the  achievable  primary and secondary rates  
based on the  channel estimation $ \hat{\mathbf{g}}_{m} $ 
and $ \hat{\mathbf{h}}_{m}, m=1,...,M$, by taking into account the channel estimation errors. 
By substituting $\mathbf g_{m}=\hat {\mathbf g}_{m} + \tilde{\mathbf g}_{m}$ and 
$q\mathbf f_{m} =\hat {\mathbf h}_{m}+\tilde{\mathbf h}_{m}$ into (\ref{eq:rn1}), 
the received signal   can be written as
\begin{equation}\label{eq:rnest}
  \begin{aligned}
  r(n) &=     \sum\nolimits_{m=1}^{M} \big[(\hat{\mathbf{g}}_{m}+\tilde{\mathbf{g}}_{m})^{H}  {\mathbf{w}_{m}} s(n)+
  \\ & \sqrt{\alpha}(\hat{\mathbf{h}}_{m}+\tilde{\mathbf{h}}_{m})^{H} {\mathbf{w}_{m}} s(n)c(n) \big] +z(n). 
\end{aligned}
\end{equation}

For decoding  the primary signals $s(n)$, besides the interference from the 
backscatter symbols $c(n)$, the term caused by the channel estimation error 
$\tilde{\mathbf{g}}_{m}$ is also treated as noise \cite{YZeng2014a, HassibiB2003a}.  
Therefore, (\ref{eq:rnest}) can be  decomposed as
\begin{equation}
  r_{s}(n) = {\rm DS}'\cdot s(n)+ {\rm ER}+{\rm ST}+z(n),
\end{equation}
where ${\rm DS}', {\rm ER}$, and ${\rm ST}$ denote the desired signal, 
estimation errors and the secondary transmission signal, respectively, which are given by
\begin{align}
  {\rm DS}'&= \sum\nolimits_{m=1}^{M}  \hat{\mathbf{g}}_{m} ^{H}  {\mathbf{w}_{m}} ,\label{eq:DS}\\
  {\rm ER}&= \sum\nolimits_{m=1}^{M} \big( \tilde{\mathbf{g}}_{m}  
  +\sqrt{\alpha}\tilde{\mathbf{h}}_{m}c(n)\big)^{H} {\mathbf{w}_{m}}s(n),\label{eq:ER}\\
  {\rm ST}&= \sum\nolimits_{m=1}^{M}  
  \sqrt{\alpha}\hat{\mathbf{h}}_{m} ^{H} {\mathbf{w}_{m}}s(n)c(n) .
\end{align}

Therefore, the resulting SINR can be expressed as (\ref{eq:SINRsest}) shown at the top
of the next page, and the achievable rate is $  R_s=\log_2(1+\gamma_s)$.
\newcounter{TempEqCnt} 
\setcounter{TempEqCnt}{\value{equation}} 
\setcounter{equation}{35} 

\begin{figure*}[ht] 
  \begin{equation}\label{eq:SINRsest}
    \begin{aligned}
    \gamma_{s} & = \frac{|{\rm DS}'|^{2}}{\mathbb {E}_{s(n),c(n)} \big[|{\rm ER} |^{2}\big]
    +\mathbb {E}_{s(n),c(n)} \big[|{\rm ST} |^{2}\big]+\sigma^{2}}
    \\ &= \frac{| \sum_{m=1}^{M}\hat{\mathbf{g}}_{m}^{H}  {\mathbf{w}_{m}}|^{2}}
    { \mathbb{E}_{s(n),c(n)}\big[|s(n)|^{2}| \sum_{m=1}^{M} \big( \tilde{\mathbf{g}}_{m}  
    \!+\!\sqrt{\alpha}  \tilde{\mathbf{h}}_{m}c(n)\big)^{H} {\mathbf{w}_{m}} |^{2}\big]  
    \! +\!\alpha\mathbb{E}_{s(n),c(n)}\big[
     | \sum_{m=1}^{M} \!\hat{\mathbf{h}}_{m} ^{H} \!{\mathbf{w}_{m}}  |^{2} 
     |s(n)|^{2}|c(n)|^{2}\big] 
     \!+\! \sigma^{2}  }
     \\ &=\frac{| \sum_{m=1}^{M}\hat{\mathbf{g}}_{m}^{H}  {\mathbf{w}_{m}}|^{2}}
    {   \sum_{m=1}^{M}\sum_{l=1}^{M} {\mathbf{w}_{m}^{H}} ( \tilde{\mathbf{g}}_{m} 
    \tilde{\mathbf{g}}_{l}^{H}\!+\!
     \alpha  \tilde{\mathbf{h}}_{m}\tilde{\mathbf{h}}_{l}^{H}){\mathbf{w}_{l}}  
    \! +\!\alpha
     | \sum_{m=1}^{M} \!\hat{\mathbf{h}}_{m} ^{H} \!{\mathbf{w}_{m}}  |^{2}  
     \!+\! \sigma^{2}  }.
  \end{aligned}
\end{equation}
\hrulefill  

\end{figure*}

Note that for any given channel estimations $\hat{\mathbf{g}}_{m} $ and 
$\hat{\mathbf{h}}_{m}$, since the channel estimation errors 
$\tilde{\mathbf{g}}_{m} $ and $\tilde{\mathbf{h}}_{m}$ are random, 
the SINR in (\ref{eq:SINRsest}) and hence its rate $R_s$ is random. 
By taking the expected achievable rate with respect to the random 
estimation errors $\tilde{\mathbf{g}}_{m} $ 
and $\tilde{\mathbf{h}}_{m}$, we have the result (\ref{eq:ERs}) shown 
at the top of the next page, 
where we have denoted the average channel estimation-error-plus-noise power as

\setcounter{TempEqCnt}{\value{equation}} 
\setcounter{equation}{36} 
\begin{figure*}[ht] 
  \begin{equation}\label{eq:ERs}
    \begin{aligned}
    \mathbb{E}[R_{s}]&=\mathbb{E}_{\tilde{\mathbf{g}}_{m}  
    , \tilde{\mathbf{h}}_{m}}\big[\log_{2}(1+\gamma_{s})\big] 
    \\ &\geq \log_{2}\Big(1+\frac{| \sum_{m=1}^{M}\hat{\mathbf{g}}_{m}^{H}  {\mathbf{w}_{m}}|^{2}}
    {  \mathbb{E}_{\tilde{\mathbf{g}}_{m}  
    , \tilde{\mathbf{h}}_{m}}\big[ \sum_{m=1}^{M}\sum_{l=1}^{M} {\mathbf{w}_{m}^{H}} ( \tilde{\mathbf{g}}_{m} 
    \tilde{\mathbf{g}}_{l}^{H}\!+\!
     \alpha  \tilde{\mathbf{h}}_{m}\tilde{\mathbf{h}}_{l}^{H}){\mathbf{w}_{l}}\big]  
    \! +\!\alpha
     | \sum_{m=1}^{M} \!\hat{\mathbf{h}}_{m} ^{H} \!{\mathbf{w}_{m}}  |^{2}  
     \!+\! \sigma^{2}  }\Big)
     \\ &=\log_{2}\Big(1+\frac{| \sum_{m=1}^{M}\hat{\mathbf{g}}_{m}^{H}  {\mathbf{w}_{m}}|^{2}}
     { \sum_{m=1}^{M} {\mathbf{w}_{m}^{H}} (\mathbf{R}_{\tilde{\mathbf{g}},m}  + 
      \alpha\mathbf{R}_{\tilde{\mathbf{h}},m}  ) {\mathbf{w}_{m}} + \sigma^{2}
      +\alpha
      |\! \sum_{m=1}^{M}\!\hat{\mathbf{h}}_{m}^{H}  {\mathbf{w}_{m}} |^{2}  }\Big)
     \\ &=\log_{2}\Big(1+\frac{| \sum_{m=1}^{M}\hat{\mathbf{g}}_{m}^{H}  {\mathbf{w}_{m}}|^{2}}
     {E+\! \alpha
   |\! \sum_{m=1}^{M}\!\hat{\mathbf{h}}_{m}^{H}  {\mathbf{w}_{m}} |^{2} }\Big) 
   \\ & \triangleq  \bar{R}_{s,\mathrm{LB}}.
  \end{aligned}
  \end{equation}
  \hrulefill  

\end{figure*} 

\begin{equation}\label{eq:E}
  E= \sum_{m=1}^{M}P_{m}\big[\frac{ b_{m}}{ 1+e_{1} b_{m} }
+\frac{\alpha \epsilon_{m} (\frac{e_{2}b_{m}}{1+e_{1}b_{m}}+1)}
{\alpha e_{2} \epsilon_{m}
+\frac{e_{2}b_{m}}{1+e_{1}b_{m}}+1}\big]+\sigma^{2}.
\end{equation}

In (\ref{eq:ERs}), the lower bound of the expected primary communication rate $\mathbb{E}[R_{s}]$
is denoted by $\bar{R}_{s,\mathrm{LB}}$. 
Note that the inequality in (\ref{eq:ERs}) follows from   Jensen's inequality, and the fact that
 $\log_2(1+C/x)$ 
is a convex function for $x>0$.

Next, we derive the achievable rate of the secondary signals $c(n)$. After  
decoding $s(n)$, the primary signals $s(n)$ can be subtracted from (\ref{eq:rnest}) based on  the 
estimated  channel $\hat{\mathbf{g}}_{m}$. The resulting signal is
\begin{equation}\label{eq:Rcest}
  \begin{aligned}
   {r}_{c}(n)&= \sum\nolimits_{m=1}^{M} \big[\sqrt{\alpha}  ( \hat{\mathbf{h}}_{m}+\tilde{\mathbf{h}}_{m})^{H}  {\mathbf{w}_{m}}s(n)c(n) 
  \\ &+   \tilde{\mathbf{g}}_{m}^{H}  {\mathbf{w}_{m}} s(n)\big]+z(n).
  \end{aligned}
\end{equation}

By treating the terms caused by the channel estimation error 
$\tilde{\mathbf{g}}_{m}$ and $\tilde{\mathbf{h}}_{m}$ as noise, 
(\ref{eq:Rcest}) can be decomposed as
\begin{equation}
  r_{c}(n)= {\rm DS}''\cdot c(n)+{\rm ER}+z(n),
\end{equation}
where  $ {\rm ER}$  accounts for  the
estimation errors given in (\ref{eq:ER}), and ${\rm DS}''$ denotes the desired signal
in the decoding of the secondary signals $c(n)$,  
which is given by
\begin{equation}
  {\rm DS}''= \sum\nolimits_{m=1}^{M}  \sqrt{\alpha}    \hat{\mathbf{h}}_{m}
  ^{H}  {\mathbf{w}_{m}}s(n) .
\end{equation}
 
The resulting   SINR is  
\begin{equation}\label{eq:SINRcest}
  \begin{aligned}
    \gamma_{c} &\!=\!\frac{|{\rm DS}''|^{2}}{\mathbb{E}_{s(n),c(n)}\big[|{\rm ER} |^{2}\big] +\sigma^{2}}
    \\ &\!=\!\frac{\alpha | \sum_{m=1}^{M}\hat{\mathbf{h}}_{m}^{H}  {\mathbf{w}_{m}}|^{2} |s(n)|^{2}}
    {\sum_{m=1}^{M}\sum_{l=1}^{M} {\mathbf{w}_{m}^{H}} ( \tilde{\mathbf{g}}_{m} 
    \tilde{\mathbf{g}}_{l}^{H}\!+\!
     \alpha  \tilde{\mathbf{h}}_{m}\tilde{\mathbf{h}}_{l}^{H}){\mathbf{w}_{l}}  \! 
     +  \!  \sigma^{2}  },
  \end{aligned}
\end{equation}
and the achievable rate is $  R_c=\log_2(1+\gamma_c)$.

Note that different from (\ref{eq:DS}), as the desired channel ${\rm DS}''$ also depends on the 
primary symbols $s(n)$, the SINR in (\ref{eq:SINRcest}) is a random variable that depends on 
both $|s(n)|^2$ and the channel estimation errors. Consider the expectation of $R_{c}$,
which is  taken with respect to both 
$|s(n)|^2$ and the channel estimation errors, we have $\mathbb{E}[R_{c}]$ 
given in (\ref{eq:ERc}) at the top of next page.
In (\ref{eq:ERc}) the inequality is obtained by applying Jensen's inequality to the 
convex function $\mathrm{log}_2(1+C/x)$, and
 $\beta^{'}_{c}=\frac{ \alpha | \sum_{m=1}^{M}
\hat{\mathbf{h}}_{m}^{H} {\mathbf{w}_{m}}|^{2}   }{E}$ represents the average
SINR for the secondary signals taking into account the channel estimation errors.
Similarly,  we denote the lower bound of expectation rate $\mathbb{E}[R_{c}] $ with $\bar{R}_{c,\mathrm{LB}}$.
\setcounter{TempEqCnt}{\value{equation}} 
\setcounter{equation}{42} 

\begin{figure*}[ht] 
  \begin{equation}\label{eq:ERc}
    \begin{aligned}
       \mathbb{E}[R_{c}] & =\mathbb{E}_{\tilde{\mathbf{g}}_{m}  
      , \tilde{\mathbf{h}}_{m}, s(n)}\big[\log_{2}(1+\gamma_{c})\big] 
      \geq \!\mathbb{E}_{s(n)}\Big[\log_{2}\big(1\!+\!\frac{\alpha | \sum_{m=1}^{M}\hat{\mathbf{h}}_{m}^{H}  
      {\mathbf{w}_{m}}|^{2}|s(n)|^{2} }
      { \mathbb{E}_{\tilde{\mathbf{g}}_{m}  
      , \tilde{\mathbf{h}}_{m} }\big[\sum\limits_{m=1}^{M}\sum\limits_{l=1}^{M} {\mathbf{w}_{m}^{H}} ( \tilde{\mathbf{g}}_{m} 
      \tilde{\mathbf{g}}_{l}^{H}\!+\!
       \alpha  \tilde{\mathbf{h}}_{m}\tilde{\mathbf{h}}_{l}^{H}){\mathbf{w}_{l}} \big]\! 
       +  \! \sigma^{2} }\big)\Big]
     \\ &=\mathbb{E}_{s(n)}\Big[\log_{2}\big(1+\frac{ \alpha | \sum_{m=1}^{M}
     \hat{\mathbf{h}}_{m}^{H} {\mathbf{w}_{m}}|^{2} |s(n)|^{2} }
     {E}\big)\Big]
     =\int_{0}^{\infty}\log_{2}(1+\beta^{'}_{c} x)e^{-x}dx
    \\ &=-e^{\frac{1}{\beta^{'}_{c}}}\mathrm{Ei}(-\frac{1}{\beta^{'}_{c}})\log_{2}e
    \\ & \triangleq  \bar{R}_{c,\mathrm{LB}}.
    \end{aligned}
  \end{equation}
\hrulefill  

\end{figure*}


\subsection{ Rate-Region Characterization with Imperfect CSI}

In this subsection, the achievable rate-region of cell-free symbiotic radio system 
is characterized by taking into account the CSI estimation errors.

Similar to the  optimization problem (\ref{eq:OP1}) 
with perfect CSI studied in Section  \RNum{3}, the  achievable rate-region
under channel estimation errors can be characterized by
maximizing the expected secondary communication rate 
$\bar{R}_{c,\mathrm{LB}} $ in (\ref{eq:ERc})
with a given targeting primary communication rate constraint $R_{\mathrm{th}}$
for $\bar{R}_{s,\mathrm{LB}} $ in (\ref{eq:ERs}).  
The optimization problem can be formulated as
\begin{subequations}\label{eq:OPest1}
  \begin{align}
  \begin{split}
    \max   \limits_{ {\mathbf{w}_m, m=1,...,M}} \quad &
        \bar{R}_{c,\mathrm{LB}}  
  \end{split}\\
  \begin{split}
    {\rm s.t.} \quad & \bar{R}_{s,\mathrm{LB}}  \geq R_{\mathrm{th}}\hfill ,
  \end{split}\\
  \begin{split}
    \quad  & \big \| {\mathbf{w}_m} \big \|^{2} \leq  P_{m},  \qquad  m=1,...,M.
  \end{split}
\end{align}
\end{subequations}

Since the  $\bar{R}_{c,\mathrm{LB}} $ is also monotonically
increasing with respect to $\beta^{'}_{c}$, 
the objective function (\ref{eq:OPest1}a) is equivalent to $\beta^{'}_{c}$. 
Therefore, problem (\ref{eq:OPest1}) can be equivalently written as

\begin{subequations}\label{eq:OPest2}
  \begin{align}
  \begin{split}
        \max   \limits_{ {\mathbf{w}_m, m=1,...,M}} \quad &
        | \sum_{m=1}^{M}
        \hat{\mathbf{h}}_{m}^{H} {\mathbf{w}_{m}}|^{2},  
  \end{split}\\
  \begin{split}
    {\rm s.t.}  \quad & \log_{2}\Big(1+ \frac{| \sum_{m=1}^{M}\hat{\mathbf{g}}_{m}^{H}  {\mathbf{w}_{m}}|^{2}}
  {E+\! \alpha
|\! \sum_{m=1}^{M}\!\hat{\mathbf{h}}_{m}^{H}  {\mathbf{w}_{m}} |^{2} }\Big) \geq R_{\mathrm{th}},
  \end{split}\\
  \begin{split}
  \quad &\big \| {\mathbf{w}_m} \big\|^{2} \leq  P_{m},  \qquad  m=1,...,M.
  \end{split}
\end{align}
\end{subequations}

Note that different from problem (\ref{eq:OP2}) that requires perfect CSI, 
problem (\ref{eq:OPest2}) only requires the estimated channel CSI 
$\hat{\mathbf{g}}_{m}^{H}$ and $\hat{\mathbf{h}}_{m}^{H}$, $m=1,...,M$, 
and the impact of CSI estimation error is reflected in $E$.  
Furthermore, since problem (\ref{eq:OPest2}) has exactly 
the same structure as (\ref{eq:OP2}), the feasibility analysis as well as 
the proposed solution in Theorem 1 and Algorithm 2 can be directly used to 
solve problem (\ref{eq:OPest2}). The details are omitted to avoid repitition.



\section{Simulation Results}

In this section,  simulation results are provided to 
evaluate the performance of our proposed design for
cell-free symbiotic radio systems. 
Without loss of generality, we establish a Cartesian coordinate system, 
where the BD is located at the 
origin (0,0), and the receiver is located at (5m, 0). 
Furthermore,   $M=16$ APs, each with $N=4$ antennas, are 
evenly spaced in a square area of size $750\mathrm{m} \times 750\mathrm{m} $, 
i.e., their locations correspond to the $4 \times 4$ grid points, with 
the x- and y-coordinates chosen from the set \{-375m, -125m, 125m, 375m\}.  
The  small-scale fading coefficients for the channels of different communication links
follow the i.i.d. CSCG distribution with zero mean and unit variance. 
Furthermore, the large-scale channel gains of AP-to-BD and AP-to-receiver links are  modeled as 
$b_{m}=\beta_{0}  d_{m}^{-\gamma}$,
where $\beta_{0}=(\frac{\lambda}{4\pi})^{2}$ is the reference channel gain, 
with $\lambda=0.0857$m denoting the wavelength, $d_{m}$ represents the 
corresponding channel link distance of AP-to-BD or AP-to-receiver, 
and $\gamma$ denotes the path loss 
exponent. We set $\gamma=2.7$ for the AP-to-BD and AP-to-receiver channels.
And the large-scale channel coefficients $\epsilon_{m}$ of the cascaded 
AP-BD-receiver channels are modeled as
$\epsilon_{m}=0.001 \zeta_{m}$, where $\zeta_{m}$ represents the large-scale coefficients
of AP-to-BD channels. 
The power reflection coefficient is $\alpha=1$, and the 
transmitter-side SNR for data and pilot transmission are set as 
$\frac{P_{m}}{\sigma^{2}}=\frac{P_{t}}{\sigma^{2}}=130$ dB, $m=1,...,M$,
which may correspond to $P_{m}=P_{t}=20$ dBm and $\sigma^2=-110$ dBm.
The termination  threshold for Algorithm 1 is set as $\kappa_{1}=0.5\%$.
We further denote  $\tau_{\mathrm{total}}=\tau_{1}+\tau_{2} $ as the total  
pilot length used for channel estimation in  phase 1 and phase 2 and  
 $l_{1}=\frac{\tau_{1}}{\tau_{\mathrm{total}}}$ and    
$l_{2}=\frac{\tau_{2}}{\tau_{\mathrm{total}}}$
denote the ratios of the total pilot length allocated for the first and 
second training phases, respectively. 

\begin{figure}[!t]
  \centering
\centerline{\includegraphics[height=3in, width=3.809in]{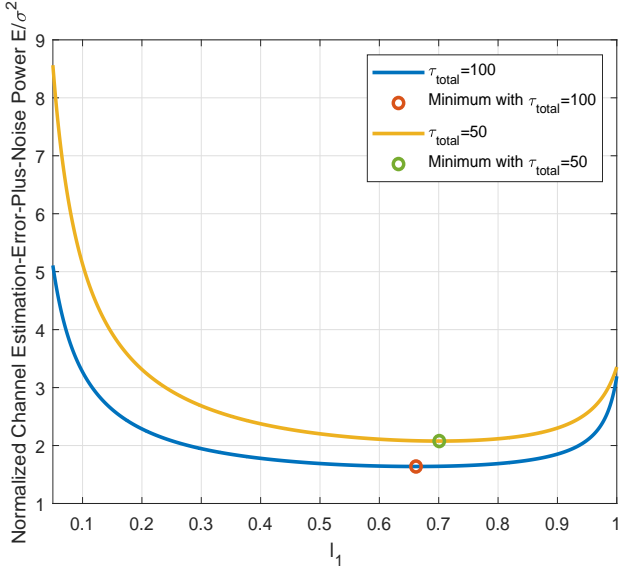}}
\caption{Normalized channel estimation-error-plus-noise power $E/\sigma^2$
versus primary pilot length ratio $l_{1}$ in phase 1.  }
\end{figure}

\begin{figure}[!t]
  \centering
\centerline{\includegraphics[height=3in, width=3.809in]{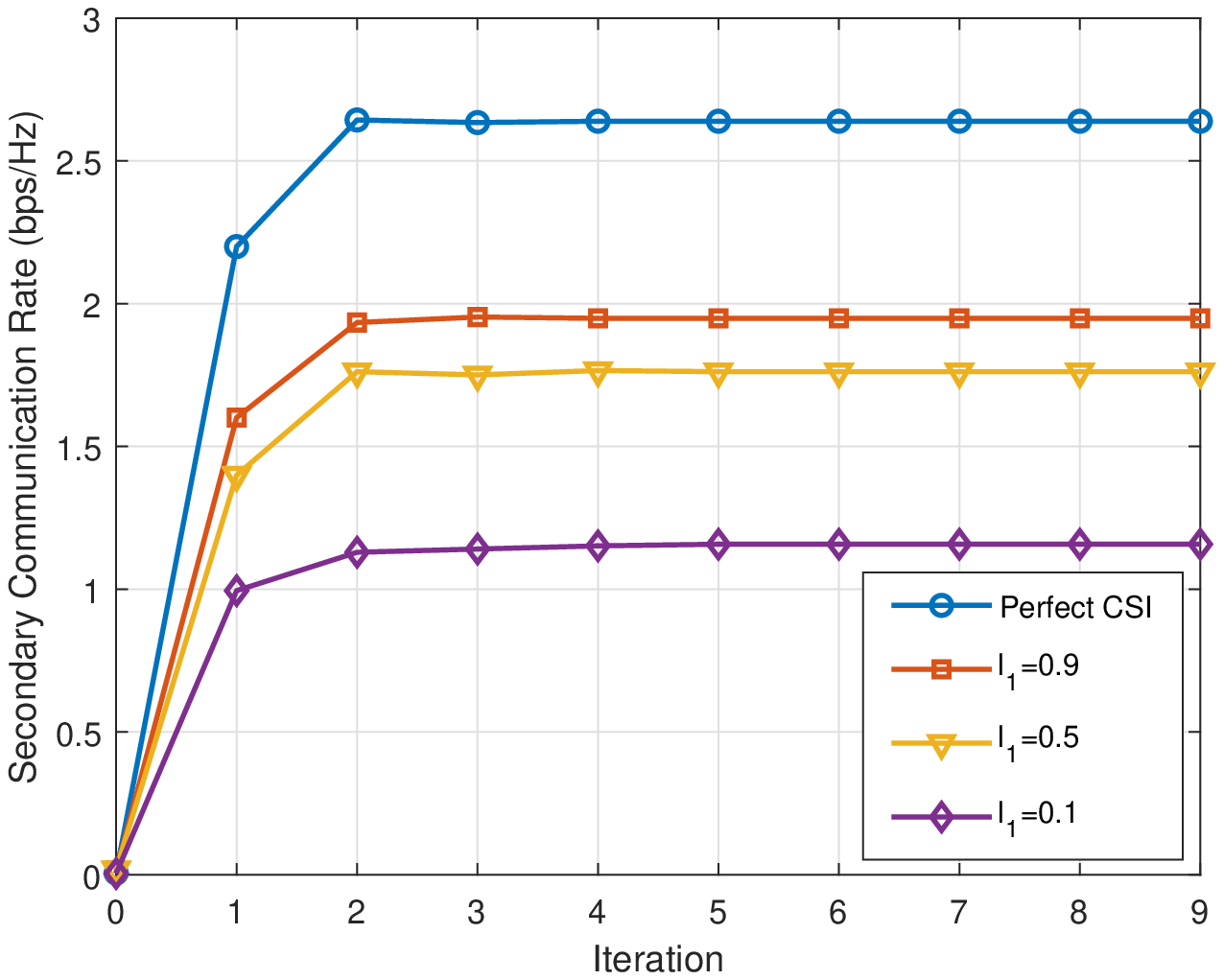}}
\caption{Convergence of the SCA iteration process with 
different primary pilot length ratio $l_{1}$, 
where $R_{\mathrm{th}} = 12 \mathrm{bps/Hz}$, $\tau_{\mathrm{total}}=50$.  }
\end{figure}
\begin{figure}[!t]
  \centering
\centerline{\includegraphics[height=3in, width=3.809in]{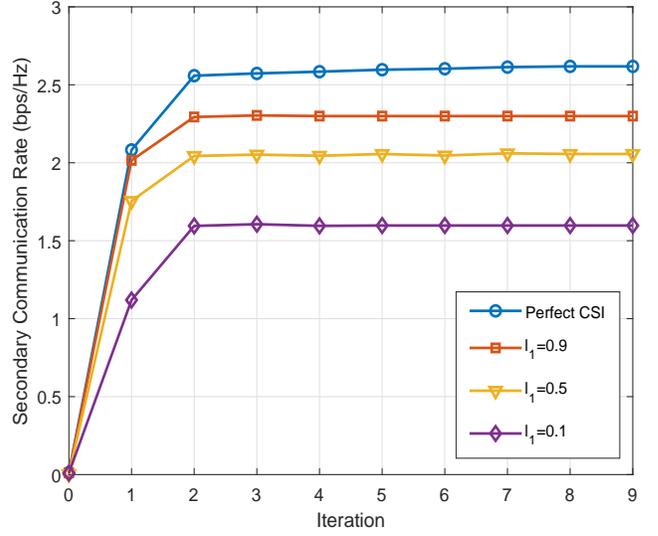}}
\caption{Convergence of the SCA iteration process with 
different primary pilot length ratio $l_{1}$, 
where $R_{\mathrm{th}} = 12 \mathrm{bps/Hz}$, $\tau_{\mathrm{total}}=100$.   }
\end{figure}

\subsection{Channel  Estimation Error    and Convergence of SCA Algorithm}

As can be inferred from (\ref{eq:ERs}) and (\ref{eq:ERc}), with imperfect CSI, not only the additive noise but also
the channel estimation error will impair the performance of cell-free symbiotic radio
systems. Therefore, Fig. 2 shows the normalized  
channel estimation-error-plus-noise power, i.e., 
$E/\sigma^2$, where $E$ is given in equation (\ref{eq:E})
and the normalization is taken with respect to the noise power $\sigma^2$.  
The ratio $l_{1}$ of the  training phase 1
varies from $0.05$ to $1$. The  total pilot length 
$\tau_{\mathrm{total}}$ is set as 50 and 100, respectively.
Note that with perfect CSI,  we have
$\frac{E}{\sigma^2}=1$.
It is observed from Fig. 2 that 
the curve with   $\tau_{\mathrm{total}}=100$  always stays below
the curve with $\tau_{\mathrm{total}}=50$.
This is expected since  the accuracy of channel estimation
can be improved with a longer  pilot length.
Furthermore, for both  $\tau_{\mathrm{total}}=100$ and 
$\tau_{\mathrm{total}}=50$, as ratio $l_{1}$ increases, $E/\sigma^2$ first decreases, and 
then increases slowly after reaching their lowest point. This is expected 
since for a given total training length, there exists a trade-off between the 
estimation error  in phase 1 and phase 2, as can be inferred from (\ref{eq:E}). 
Specifically, the increase of $l_{1}$ increases the training ENR $e_{1}$, 
which  effectively reduces the primary channel estimation error
in (\ref{eq:gtilde}),   while
the secondary cascaded channels suffer from more serious estimation error.
Besides, it can be inferred from Fig. 2 that the 
average channel estimation-error-plus-noise power $E$ is more sensitive to the pilot 
length of phase 1 than that of phase 2.
Specifically, when a short pilot was allocated to phase 1 (say $l_{1}=0.1$),
the corresponding  $E/\sigma^2$ is much higher than its counterpart
with $l_{1}=0.9$, which corresponds a low pilot allocation to phase 2.
This   is expected since
the estimation error of the primary channels in phase 1 also affects 
the channel estimation accuracy of the secondary cascaded link in phase 2,
as can be inferred from (\ref{eq:Rhtilde}).
As a result, with a limited pilot length, higher priority should be given to 
  phase 1 in order to decrease the estimation error of the whole system.
Fig. 2 also labels the optimal pilot allocation 
for which ${E/\sigma^2}$ achieves the minimum value. It is observed that 
around the optimal point, 
the curves are rather flat,     
indicating that the impact of the channel estimation error 
would be comparable for a wide range of pilot length allocations, say for $0.3<l_{1}<0.9$. 
Therefore, in the following,  we choose $l_{1}=0.1,0.5,0.9$ 
as the representative values to show the impact of different pilot length 
allocations on system performance.

Fig. 3 and Fig. 4  show the convergence of 
the proposed SCA based algorithm in Algorithm 2  with 
total pilot lengths of $\tau_{\mathrm{total}}= 50$ and $100$, respectively. The iterations start with
randomly generated initial local points, with a rate threshold of
$R_{\mathrm{th}} = 12$ bps/Hz and an iteration
terminating threshold of $\kappa_{2}=0.5\%$. Note that different curves in 
Fig. 3 and Fig. 4 show the convergence with perfect CSI and different primary pilot
length ratios $l_{1}$, respectively.   
As shown in the figure, the passive secondary communication rate, 
which is the optimization objective of problem (\ref{eq:OP2}), 
increases monotonically  during iterations, 
which is in accorance with Lemma 1. Furthermore, Fig. 3 and Fig. 4 show
that only a few iterations are needed for Algorithm 2 to converge.

\subsection{Achievable Rate-Region  }

Fig. 5 shows the achievable rate-region of the primary and secondary communication rates 
with perfect and imperfect CSI, respectively. In the case of imperfect CSI,
the primary pilot length ratios are selected as $l_1=0.1,0.5,0.9$, and
the simulation results are obtained by running 500 experiments 
at each primary pilot length ratio based on the  random cell-free symbiotic radio
system channel realization. 
The total pilot length $\tau_{\mathrm{total}}$ is set to 50.
Note that each point
of the curve corresponds to a primary-secondary rate pair, by varying the primary communication 
threshold $R_{\mathrm{th}}$ from $\widehat{R}_{s}$ to $\bar{R}_{s}$ with 
step size 1. 
Meanwhile,  the dot-dash line shows the portion with 
$R_{\mathrm{th}}\leq \widehat{R}_{s} $, 
where the optimization problem has a closed-form solution given in Theorem 1, and the
primary and secondary communication rate 
can be obtained  in (\ref{eq:Rth<widehideR}) and (\ref{eq:Rcwidehat}).
If $ \widehat{R}_{s} < R_{\mathrm{th}} \leq \bar{R}_{s} $, 
the achievable rate of the secondary backscattering communication
decreases  monotonically as the primary communication  threshold $R_{\mathrm{th}}$ increases, 
and eventually approaching zero.
It is observed from
Fig. 5 that with the total pilot length $\tau_{\mathrm{total}}$ fixed, 
the achievable rate-region taking into account imperfect CSI estimation,  
is highly dependent on the allocation of the pilot $l_{1}$. 
Out of the three pilot allocation considered,  $l_{1}=0.5$ gives the best performance.
This is consistent with Fig. 2, where $l_{1}=0.5$ gives 
smaller channel estimation error than $l_{1}=0.1$ and $l_{1}=0.9$.
Furthermore, Fig. 5 also shows that $l_{1}=0.9$ outperforms $l_{1}=0.1$. 
This is also consistent with Fig. 2, by comparing their 
respective channel estimation errors.

Fig. 6 shows the achievable rate-region with a fixed 
total pilot length $\tau_{\mathrm{total}}=100$.
Similar to Fig. 5, a higher 
primary communication  threshold $R_{\mathrm{th}}$ brings a lower secondary communication 
rate. Besides, when comparing Fig. 6 to Fig. 5, it is observed 
that larger rate-region is achieved for any given pilot allocation $l_{1}$
in Fig. 6  than its counterpart in Fig. 5. 
This is expected since a longer pilot length enhances  the estimation 
accuracy of both the active primary and passive  secondary channels, 
thus enlarging the achievable rate-region  significantly.
Meanwhile, it is observed that 
the performance gap between $l_{1}=0.5$ and $l_{1}=0.9$ 
is smaller  in Fig. 5 than that in Fig. 6, while the reverse is true for the gap
between $l_{1}=0.5$ and $l_{1}=0.1$.
This implies that the performance improvement to increase the priority for
primary channel estimation is more significant when  
$\tau_{\mathrm{total}}$ is  relatively small.
Thus, when the pilot length  $\tau_{\mathrm{total}}$ is severely limited, 
higher priority should 
be given to the  training phase 1.
This is expected since the estimation of
the direct-link channels in the first phase affects not only
the primary communication rate, but also the quality of the
channel estimation of the backscatter channels.


\begin{figure}[!t]
	\centering
\includegraphics[height=3in, width=3.809in]{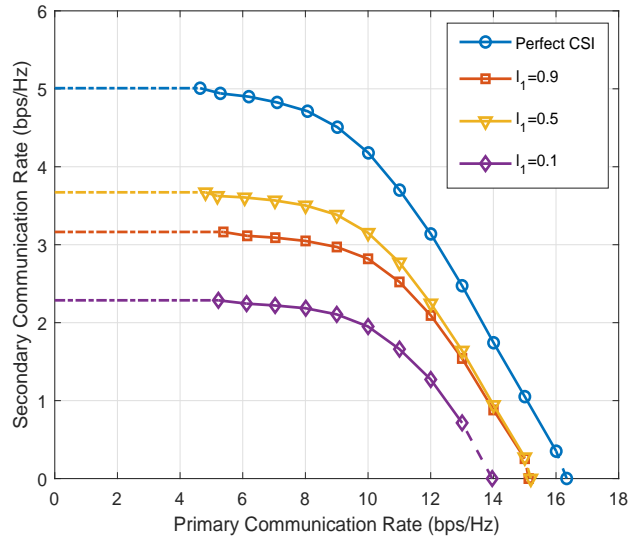}
\caption{The achievable rate-region of cell-free symbiotic radio system, 
where   $\tau_{\mathrm{total}}=50$. }
\end{figure}
\begin{figure}[!t]
	\centering
\includegraphics[height=3in, width=3.809in]{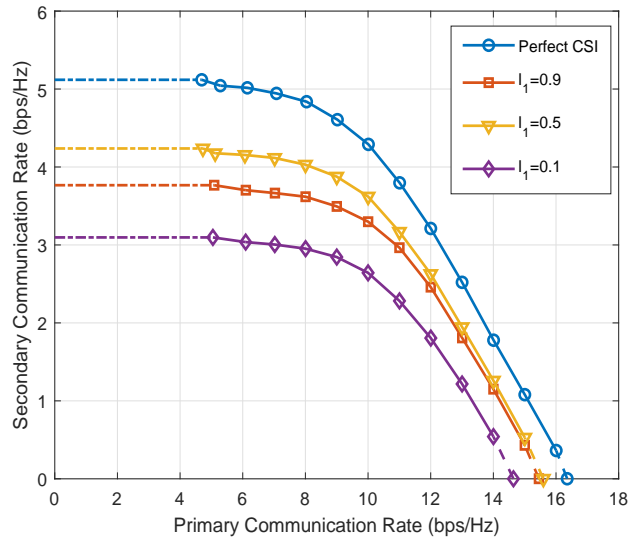}
\caption{The achievable rate-region of cell-free symbiotic radio system, 
where   $\tau_{\mathrm{total}}=100$. }
\end{figure}

\section{Conclusion}

In this paper, a novel cell-free symbiotic radio system was investigated, 
in which a number of distributed APs cooperatively send primary information to the receiver, 
while concurrently supporting the secondary backscattering communication. 
The achievable rates of both the active primary and passive secondary communications  were first  derived 
under the assumption of perfect CSI. Furthermore, 
a two-phase uplink-training based channel
estimation method was proposed to  effectively estimate the direct-link channel and cascaded 
backscatter channel, and the achievable rates were revisited
with  channel estimation errors.
In order to characterize the achievable rate-region, 
a beamforming optimization problem was formulated to maximize
the passive secondary communication rate with a targeting active primary 
communication rate constraint, for both perfect CSI
and imperfect CSI. Efficient algorithms were proposed 
to solve the formulated optimization problem.
The performance of the cell-free symbiotic
radio communication systems was 
validated with extensive simulation results.  

 

%


 
\begin{appendices}
 
\vspace{10pt} 

\section{Proof of Theorem 1}
To prove Theorem 1, we first consider a relaxed problem of (\ref{eq:OP2}) 
by omitting  the targeting primary communication
rate constraint (\ref{eq:OP2}b). The problem is formulated as

\begin{subequations}\label{eq:OPMRT1}
  \begin{align}
  \begin{split}
    \max   \limits_{ {\mathbf{w}_m, m=1,...,M}} \quad &
    | \sum_{m=1}^{M}
    \mathbf{f}_{m}^{H} {\mathbf{w}_{m}}|^{2}
  \end{split}\\
  \begin{split}
    {\rm s.t.} 
    \quad  & \big \| {\mathbf{w}_m} \big \|^{2} \leq  P_{m},  \qquad  m=1,...,M.
  \end{split}
\end{align}
\end{subequations}

It is not difficult to see that the optimal solution to problem (\ref{eq:OPMRT1})  is 
the per-AP maximum ratio transmission (MRT) beamforming  with maximum transmit power, 
which is given by (\ref{eq:wMRT}) in Theorem 1. In this case, the corresponding  
primary communication rate on the left hand side of (\ref{eq:OP2}b) 
is given in $\widehat{R}_{s}$ shown in (\ref{eq:Rth<widehideR}).



Therefore, when the condition $R_{\mathrm{th}}\leq\widehat{R}_{s}$ given in Theorem 1
is satisfied, 
the solution in (\ref{eq:wMRT}) satisfies both constraints in (\ref{eq:OP2}b) and (\ref{eq:OP2}c), 
and thus it is also feasible to the original problem (\ref{eq:OP2}). 
Furthermore, since the beamforming vectors in (\ref{eq:wMRT}) is the optimal solution 
to the relaxed problem (\ref{eq:OPMRT1}) and is also feasible to (\ref{eq:OP2}),
it must also be the optimal solution to
the original problem (\ref{eq:OP2}), since the former has a larger feasibility 
region than the latter. 

This completes the proof of Theorem 1.

\vspace{10pt}

\section{Proof of Lemma 1}
To prove Lemma 1, we note that at the $(l+1)$-th iteration, 
with Step 4 of Algorithm 2, we have
$\mathbf{w}^{(l+1)}=\mathbf{w}^{\star(l)}$, where $\mathbf{w}^{\star(l)}$ 
represents the optimal solution at the $(l)$-th iteration.
Therefore, the lower bound (\ref{eq:SCA}) holds with 
equality at the local point $\mathbf{w}=\mathbf{w}^{\star(l)}$, i.e., 
\begin{equation}\label{eq:equal}
  \begin{aligned}
    F(\mathbf{w}^{\star(l)})&=| \mathbf{f  }^{H}  {\mathbf{w}^{\star(l)}}|^{2}
    =F_{ {\rm low} }({\mathbf{w}}^{\star(l)}|{\mathbf{w}^{\star(l)}})\\ &
    =F_{ {\rm low} }({\mathbf{w}}^{\star(l)}|{\mathbf{w}^{(l+1)}}),\forall l.  
  \end{aligned}
\end{equation}

Furthermore, with (\ref{eq:SCA}) and Algorithm 2, 
we obtain the following relationship  

\begin{equation}\label{eq:relationship}
  \begin{aligned}
    F(\mathbf{w}^{\star(l+1)}) &  \underset{(a)}{\geq} F_{ {\rm low} }({\mathbf{w}}^{\star(l+1)}|{\mathbf{w}^{(l+1)}})
        \\ &\underset{(b)}{\geq} F_{ {\rm low} }({\mathbf{w}}^{\star(l)}|{\mathbf{w}^{(l+1)}}) 
        \underset{(c)}{=}F(\mathbf{w}^{\star(l)}),
  \end{aligned}
\end{equation}
where the inequality $(a)$ follows from the global lower bound (\ref{eq:SCA}), 
the inequality  $(b)$ 
holds since $\mathbf{w}^{\star(l+1)}$ is the optimal solution of problem (\ref{eq:OPSCA3}) 
in the $(l+1)$-th iteration, and the equality$(c)$ follows from (\ref{eq:equal}).

Thus, the relationship (\ref{eq:relationship}) shows that the objective  value $F(\mathbf{w}^{\star(l)})$ of (\ref{eq:OPSCA2}) obtained 
in Algorithm 2 is monotonically non-decreasing, and hence
converges to a finite limit. Besides, note that the lower
bound $ F_{ {\rm low} }({\mathbf{w}}|{\mathbf{w}^{(l+1)}})$ has 
identical gradient as $F(\mathbf{w})$ at the local point $\mathbf{w}^{\star(l)}$,
i.e., 
\begin{equation} 
  \nabla F(\mathbf{w}^{\star(l)})=
   \nabla F_{ {\rm low} }({\mathbf{w}}^{\star(l)}|{\mathbf{w}^{(l+1)}}). 
\end{equation}

Then the sequence ${\mathbf{w}}^{\star(l)}$ converges to a point that fulfills
the KKT optimality conditions of the original non-convex 
problem (\ref{eq:OPSCA2}) \cite{zeng2017energy,Zappone2017a,Marks1978a}.

This completes the proof of Lemma 1.

\end{appendices}

\ifCLASSOPTIONcaptionsoff
  \newpage
\fi



%
\vspace{20pt}

%








\end{document}